\documentclass[11pt, a4paper]{article}
\usepackage[a4paper, top=2.5cm, bottom=2.5cm, left=2cm, right=2cm]{geometry}

\usepackage{amsmath, amssymb, amsthm}
\usepackage{hyperref}
\usepackage{cite}
\usepackage{algorithm}
\usepackage{algpseudocode}
\usepackage{mathtools}
\usepackage{bm}
 
\algnewcommand{\LineFor}[2]{\State \textbf{for} #1 \textbf{do} #2 \textbf{end for}}

\usepackage{xcolor}
\usepackage{comment}
\usepackage{bm}

\title{Streaming with Catalytic Memory\footnote{An extended abstract of this paper appears in ESA 2026.}}
\author{Tamara Kaplan, Nimrod Kaplan, Haim Kaplan}

\theoremstyle{plain}

\newtheorem{theorem}{Theorem}[section]
\newtheorem{lemma}[theorem]{Lemma}

\newtheorem{claim}[theorem]{Claim}
\newtheorem{assumption}[theorem]{Assumption}
\newtheorem{proposition}[theorem]{Proposition}

\newtheorem{observation}[theorem]{Observation}

\theoremstyle{definition}
\newtheorem{definition}[theorem]{Definition}

\theoremstyle{remark}
\newtheorem{remark}[theorem]{Remark}

\newcommand{\F}{\mathbb{F}}
\newcommand{\x}{\boldsymbol{x}}
\newcommand{\y}{\boldsymbol{y}}
\newcommand{\0}{\boldsymbol{0}}

\newcommand{\btau}{\bm{\mathcal{T}}}
\newcommand{\bsig}{\boldsymbol{\sigma}}
\newcommand{\disj}{\text{Disj}}
\newcommand{\N}{\mathbb{N}}

\newcommand{\defsymbol}{\hfill $\lozenge$}

\newcommand{\Input}{\Require}
\newcommand{\Output}{\Ensure}

\begin{document}
\maketitle

\begin{abstract}
We introduce a streaming model that uses both catalytic and regular memory. In this model, we show how to exactly compute the frequency moments using a logarithmic number of bits of regular memory and a polynomial number of bits of catalytic memory. More generally, we show how to compute arbitrary polynomials of the item frequencies exactly within the same space bounds. As an application, we obtain catalytic streaming algorithms that exactly compute the number of distinct elements in a stream, count the number of triangles (or any other small subgraph) in a graph whose edges arrive in a stream, and identify heavy hitters.

Our algorithms for frequency moments perform a constant number of passes over the stream, and for polynomial evaluation, we require one more pass than the degree of the polynomial. 
By relating our catalytic streaming model to the catalytic communication model introduced in \cite{catalyticCommunication}, we show that catalytic memory is not useful for any one-pass streaming algorithm. For lower bounds on multipass streaming algorithms, the impossibility results of \cite{catalyticCommunication}  are not strong enough.
However, using a different technique, we show that under certain natural restrictions, no catalytic streaming algorithm can compute the second frequency moment in fewer than three passes.

This definition of the restricted class of  two-pass algorithms then guides us in the design of a two-pass algorithm 
for computing the second moment exactly
that circumvents these restrictions  and breaks the three-pass barrier. 

\end{abstract}

\section{Introduction}
Catalytic computation was introduced by Buhrman, Cleve, Kouck\'y, Loff, and Speelman~\cite{buhrman2014computing} as a theoretical model for understanding the computational power of a ``full hard drive'' (i.e.,\ memory which can be used but has to be restored at the end to its original content). Since then, catalytic memory has attracted significant attention and has led to several interesting results in complexity theory. In particular, researchers have explored the power of non-determinism~\cite{buhrman2018catalytic} and randomness~\cite{rand-vs-comp25} in this model, as well as non-uniform~\cite{girard2015nonuniform, potechin2017note, cook2022trading} and quantum~\cite{buhrman2025quantum} versions of it. Furthermore, interesting space-efficient algorithms using catalytic memory have been developed. 
 Notable examples include the work of Henzinger, Pyne, and Ragavan \cite{henzinger2026catalytic}, who improved upon Cook and Mertz's \cite{CMtree24} results by solving the Tree Evaluation Problem using subpolynomial catalytic memory and logarithmic regular memory. Additionally, Chmel et al. \cite{chmel2026} provided deterministic solutions for directed connectivity and sequence alignment.

A \emph{Catalytic}
Turing machine has a working tape as of a standard Turing machine and in addition it has a catalytic tape initialized arbitrarily that has to be restored to its original content by the end of the computation.\footnote{The terminology comes from chemistry: a catalyst enables a reaction to proceed without being consumed or permanently altered.}
The class \emph{Catalytic Logspace}, denoted by $\mathsf{CL}$, 
 contains all problems computable by a catalytic Turing machine with a regular tape of size logarithmic in the input length and a polynomial-size catalytic tape. Surprisingly, Buhrman et al.~\cite{buhrman2014computing} proved that 
\(
\mathsf{TC}^1 \subseteq \mathsf{CL},
\)\footnote{$\mathsf{TC}^1$ denotes the class of decision problems solvable by a family of polynomial-size, logarithmic-depth Boolean circuits with unbounded fan-in AND, OR, and MAJORITY (aka threshold) gates.} 
showing that catalytic memory can be exploited to solve problems not known to lie in $\mathsf{L}$.\footnote{$L$ is the class of problems computable  with memory of size logarithmic in the input length.}

Recently, Pyne, Sheffield, and Wang~\cite{catalyticCommunication} introduced a \emph{catalytic communication} model,
in which Alice and Bob communicate by exchanging messages using a small clean memory and a large catalytic memory. They show that this model is substantially stronger than the standard one as it enables them to solve certain problems such as set equality or inner product over $GF(2)$, exactly, using only one bit of clean memory.

Streaming algorithms typically have memory which is only logarithmic in the input size (so they cannot store their inputs). They traverse their inputs one by one in a single or multiple passes, maintain some information in their small memory and compute the result. Streaming algorithms have been intensively studied for many problems, starting from the seminal work of Alon, Matias, and Szegedy on the frequency moment \cite{AMS}, for a survey see \cite{Muthukrishnan2005,McGregor2014,ChakrabartiStreamingNotes}.

\subsection{The Catalytic Streaming model}
\label{sec:model}

In the standard multi-pass streaming model \cite{munro1980selection} (see also the surveys cited above)
a  sequence of $m$ elements $\bsig =x_1, x_2, \ldots, x_m$ is given sequentially as input. The algorithm is then allowed to make $p$  passes over $\sigma$ while  maintaining $s$ (sublinear in $m$) bits of memory.

In the \emph{catalytic streaming model} in addition to the $s$ bits of regular memory, the algorithm has access to a large catalytic memory of $c$ bits. Initially, the catalytic memory contains an arbitrary string $\bm{\mathcal{T}}$ of $c$ bits. The algorithm can use this catalytic memory however it wants, with the constraint that it must be restored to contain exactly $\bm{\mathcal{T}}$ when the algorithm ends.
Its formal definition is as follows.

\begin{definition}[Catalytic Streaming] \label{def:cat streaming}

\looseness = -1
A \emph{catalytic streaming algorithm} $A$ has a read-only input tape, a read-only stream tape (storing the current stream element), a read-write \emph{regular memory}, and a read-write \emph{catalytic memory}. 
$A$ gets $n,m\in \N$ as input, possibly additional inputs on its input tape, and a stream
$\bsig = x_1,x_2,\ldots,x_m$ of elements from a domain $D$ of size $n$.\footnote{
We assume in this paper that $n\leq m$. 
Since the number of catalytic registers which we use depends on $n$, then
when $n>m$, we need $\log n$ (rather than $\log m)$ regular bits to address the catalytic registers.}

Let $p \in \N$. We say that $A$ makes $p$ \emph{passes} if it scans its input stream $\sigma$ sequentially from $x_1$ to $x_m$, $p$ times. Thus, during each pass, at time step $j$, the symbol $x_j$ appears on the stream tape and is processed by $A$.

Suppose that initially the catalytic memory contains an arbitrary string $\bm{\mathcal{T}}$. We say that  $A$ computes the function $F$ from $D^m$ to some output range $\mathcal{O}$ in $p$ passes if, for every stream $\bsig = x_1,x_2,\ldots,x_m$, after the $p$-th pass of $A$ on input $\bsig$:
\begin{enumerate}
    \item The value $F(x_1,x_2,\dots,x_m)$ appears in memory, and
    \item The catalytic memory  is restored to contain exactly the string $\bm{\mathcal{T}}$. 
\end{enumerate}

We say that $A$ uses \emph{$s$ bits of regular memory} and \emph{$c$ bits of catalytic memory} if, at every point during its execution, it uses at most $s$ bits on the regular memory and at most $c$ bits on the catalytic memory. \defsymbol
\end{definition}

\begin{remark}
Unless stated otherwise, the stream elements in this paper are drawn from $[n]\coloneqq\{1,2,\ldots,n\}$.  
Although our complexity measure is in bits, we describe our algorithms in terms of \emph{registers} storing integers.
We perform our computations over a ring of integers modulo some $q \in \N$, which we denote by $Z_q$. Whenever additional assumptions on the ring are needed, we will state them explicitly.
\end{remark}

Basic  catalytic algorithms rely on a symmetric execution—computing forward, copying the output, and uncomputing backward to restore the memory. This often necessitates reading the input in reverse order during the uncomputation phase. In the streaming model, however, data arrives sequentially and can only be accessed in the forward direction. This restriction breaks the standard uncomputation strategy, making the design of catalytic streaming algorithms a significant challenge even with multiple passes.

\subsection{Our contributions}
We introduce the
catalytic streaming model (\autoref{def:cat streaming}), describe algorithms for classical streaming problems in this model, and  prove  lower bounds as follows.

\paragraph*{Frequency Moments}
We show that a straightforward generalization of the catalytic exponentiation algorithm given in \cite{buhrman2014computing} can be used to compute $F_k(\bsig)=\sum_{i\in [n]}f_k^k$ exactly,  using four passes over $\sigma$ for any $k\ge 2$.
Here $
f_i \coloneqq \bigl|\{j \in [m] : x_j = i\}\bigr| $ is the number of occurrences of item $i$ in $\sigma$.
Computing frequency moments is maybe the most classical and well-studied streaming problem. This result sets a clear separation between the model of catalytic streaming and the standard streaming model, since it's  well-known that any moment other than $F_1$ cannot be computed exactly in sublinear space using a constant number of passes.
Classical results give randomized sublinear algorithms for approximating the frequency moments \cite{AMS}. 

Using a different technique, we also show how to compute the second frequency moment exactly in two passes, improving upon the four-pass algorithm for general frequencies.

\paragraph*{Polynomial Evaluation}
We give a new, simple, catalytic streaming algorithm that evaluates $P(f_1,\ldots,f_n)$ in $k+1$ passes where $P$ is a multivariate polynomial of 
degree $k$. 
We can evaluate this polynomial over any ring $Z_q$ (the integers modulo $q$) as long as this ring contains multiplicative inverses to $2,3,\ldots,k$. We can do so by using $O(1)$ regular registers and $n$ catalytic registers of
$O(\log q)$ bits each.
Since the evaluation of $P(f_1,f_2,\ldots,f_n)$ over the integers is $O(m^k)$,\footnote{The hidden constant depends on the largest coefficient of a monomial.}, by choosing $q$ of this magnitude, we in fact compute $P(f_1,\ldots,f_n)$ over the integers.

\paragraph*{Lower Bounds}
We were intrigued by the question of whether we can compute $F_2$ in the catalytic streaming model in less than three passes.
Using results in the recently introduced 
catalytic communication complexity model~\cite{catalyticCommunication},
and a well-known reduction
from the \textsc{Set-Disjointness} problem in communication complexity to streaming algorithms for frequency moments,  we are able to easily show that one cannot compute $F_k$ for any $k\ge 2$ using a one-pass catalytic streaming algorithm and only sublinear regular memory.
However, in the catalytic arena this approach fails for two passes since, as we show, there is a catalytic communication protocol with small regular memory for \textsc{Set-Disjointness} in three rounds. 

This led us to characterize a class of two-pass catalytic streaming algorithms for which computing the second frequency moment using sublinear regular memory is impossible. 
The understanding of the  limitations of this class enabled us to develop a two-pass algorithm for $F_2$ that avoids them and overcomes the three-pass barrier.
We believe that our restricted family is still of interest as 
a natural extension of this family to algorithms with more than two passes includes all algorithms presented here, the only exception is the 2-pass algorithm we mentioned above.

\paragraph*{Applications: $F_0$, Counting Subgraphs, and Heavy Hitters}
Finally, we apply our algorithms for moments and polynomial evaluation to obtain space efficient catalytic streaming algorithms for several other well-studied streaming problems. In particular, we get a four-pass algorithm to compute the exact number of distinct items in a stream. We show how to count triangles (or any other small subgraph)
in a multigraph whose edges arrive as a stream. We also show that we can compute the exact set of $F_k$ heavy hitters in $O(\log n)$ passes.\footnote{All items $i$ such that $f_i^k\ge \epsilon F_k$.} 

\subsection{Roadmap}
We  follow the notation of \autoref{def:cat streaming}.
\autoref{sec:moments} shows how to compute moments in four passes. \autoref{sec:poly} gives our polynomial evaluation algorithm. We prove lower bounds in \autoref{sec:lower bounds}. \autoref{sec:app} shows how to apply our algorithm to compute the number of distinct elements in a stream (so called $F_0$), count small subgraphs in a graph stream, and compute heavy hitters.

\section{Computing Exact Moments Using 4 Passes and Logarithmic Space} \label{sec:moments}

In this section, we study the classic problem of computing the $k$-th frequency moment of a stream and prove the following theorem.

\begin{theorem} \label{thm:calculating F_k}
    Let $\bsig=x_1,x_2,\dots,x_m$ be a stream over $[n]$, and let $f_1,f_2,\dots,f_n$ denote the corresponding frequencies.
    Then there exists a catalytic streaming algorithm that evaluates
    \(
    F_k(\bsig) \coloneqq \sum_{i=1}^n f_i^k
    \)
    in $4$ passes, using $O(1)$ regular registers and $n \cdot (k+1)$ catalytic registers. This algorithm works over any ring $Z_q$. To compute $F_k$ over the integers it is enough to take $q = O(m^k)$.
\end{theorem}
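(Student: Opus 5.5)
The plan is to sandwich $f_i$ between two copies of an unknown catalytic value $a_i$: the algorithm adds the powers of $(a_i+f_i)$ into catalytic registers, evaluates a binomial combination of them against $-a_i$, and then cancels the unknown initial contents by uncomputing in later passes. For each item $i\in[n]$ I use $k+1$ catalytic registers $A_i, B_{i,1},\dots,B_{i,k}$ over $Z_q$, with arbitrary initial contents $a_i, b_{i,1},\dots,b_{i,k}$. I also use one regular accumulator $S$ initialised to $0$, plus $O(1)$ regular registers for loop indices (over $i\le n$ and $j\le k$) and for computing powers and binomial coefficients modulo $q$. Between passes the algorithm may work freely on the catalytic memory, since it has random access to it. The point is that the catalytic memory replaces the $n$ frequency counters that regular memory cannot hold.

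The steps, in order, are as follows.
\begin{enumerate}
\item \textbf{Pass 1.} On each arrival of $x_t=i$, set $A_i \gets A_i+1$, so that afterwards $A_i=a_i+f_i$.
\item \textbf{Offline.} For every $i$ and $1\le j\le k$, compute $A_i^j \bmod q$ in a regular register by repeated multiplication and set $B_{i,j}\gets B_{i,j}+A_i^j$.
\item \textbf{Pass 2.} On each arrival of $i$, set $A_i\gets A_i-1$, so that $A_i=a_i$ again.
\item \textbf{Offline.} Add to $S$ the quantity $\sum_i\bigl[(-A_i)^k+\sum_{j=1}^k\binom{k}{j}(-A_i)^{k-j}B_{i,j}\bigr]$. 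At this point
\[
S=\sum_i\Bigl[\sum_{j=0}^k\binom{k}{j}(-a_i)^{k-j}(a_i+f_i)^j+\sum_{j=1}^k\binom{k}{j}(-a_i)^{k-j}b_{i,j}\Bigr].
\]
\item \textbf{Pass 3.} On each arrival of $i$, set $A_i\gets A_i+1$; then, offline, set $B_{i,j}\gets B_{i,j}-A_i^j$ for all $i,j$, which restores $B_{i,j}=b_{i,j}$.
\item \textbf{Pass 4.} On each arrival of $i$, set $A_i\gets A_i-1$, which restores $A_i=a_i$. Then, offline, subtract $\sum_i\sum_{j=1}^k\binom{k}{j}(-A_i)^{k-j}B_{i,j}$ from $S$.
\end{enumerate}

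After the last step the junk terms involving $b_{i,j}$ cancel exactly, and by the binomial theorem
\[
S=\sum_i\bigl((a_i+f_i)-a_i\bigr)^k=\sum_i f_i^k=F_k(\bsig)\pmod q .
\]
The catalytic memory is exactly $\bm{\mathcal{T}}$ again. The regular memory is $O(1)$ registers of $O(\log q+\log n)$ bits, and there are $4$ passes and $n(k+1)$ catalytic registers, as the theorem states. Every operation is a ring operation in $Z_q$, with no division, so the argument works over any $Z_q$. Since $0\le F_k\le m^k$, choosing $q>m^k$ recovers the integer value.

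The main obstacle is the correctness of the cancellation, that is, the timing of when $A_i$ holds $a_i$ versus $a_i+f_i$. The accumulation in step 4 and the subtraction in step 6 must use the same multiplier $(-a_i)^{k-j}$ against $b_{i,j}$. In between, the $B$ registers must be uncomputed with $A_i=a_i+f_i$, which is why passes 3 and 4 are needed. I would check this bookkeeping carefully, including the separate handling of the $j=0$ term, which is added once and never subtracted. I would also check that all products such as binomial coefficients times powers are computed modulo $q$ in place, using only a constant number of regular registers.
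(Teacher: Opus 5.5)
Your proposal is correct and is essentially the paper's proof. Both run the Buhrman et al.\ powering trick in parallel for all items, using two passes to compute $\sum_i f_i^k$ and two further passes to uncompute the $k$ auxiliary registers per item. The differences are cosmetic: you add $(a_i+f_i)^j$ into $B_{i,j}$ where the paper subtracts $(\tau_i+f_i)^{k-j}$ from $r_{i,j}$, and you cancel the junk term $\sum_{j}\binom{k}{j}(-a_i)^{k-j}b_{i,j}$ after the fourth pass, whereas the paper inserts the corresponding junk term before the first pass.
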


A key lemma in the proof of the above theorem is a result proved in \cite{buhrman2014computing} called \emph{the powering lemma}. We formulated it here in a suitable way for our task.

\begin{lemma}[{\cite[Powering, Lemma~10 (modified)]{buhrman2014computing}}]\label{Powering Lemma}
Let $k \in \N$. Let $r$ and $r_1,r_2,\ldots,r_k$  be catalytic registers, initialized with arbitrary values $\tau$, $\tau_1, \tau_2, \ldots, \tau_k$ respectively, and let $r_o$ be a regular register. There are programs $I_1$, $I_2$, and $I_3$ that only use the registers $\{r_i\}_{i \in [k]}$ and $r_o$ such that for every input $f \in [n]$ the program
\[
I_1,\;  r\leftarrow r+f,\; I_2,\; r\leftarrow r-f,\; I_3
\]
computes
\(
r_o \leftarrow f^k.
\)
Moreover, the values of the other registers are as follows
\[
r=\tau, \quad \text{and for every $i\in [k]$,} \quad r_i=\tau_i-(\tau+f)^{k-i}.
\]  They can be restored by executing the program in reverse.
\end{lemma}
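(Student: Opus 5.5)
The proof will go by induction on $k$, building the programs from \emph{transparent} instructions of the form $r_a \leftarrow r_a \pm g(r_b,\dots)$, where $a$ does not appear among the arguments of $g$. Every such instruction is inverted by the same instruction with the opposite sign, so any program built from them is undone by running it in reverse. This gives the final sentence of the lemma for free. The main work is to arrange that $r_o$ ends up holding $f^k$ exactly and that each $r_i$ is shifted by exactly $-(\tau+f)^{k-i}$.

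\textbf{The difference trick.} Suppose a register $r_a$ changes from $\alpha$ to $\alpha+\Delta$ between two executions of $r_b \leftarrow r_b - r\cdot r_a$, and $r$ holds the same value $\rho$ at both executions. If we perform the first execution with $-$ and the second with $+$, then the net effect on $r_b$ is $\rho\Delta$, and the unknown $\alpha$ cancels. Chaining this from $r_k$ down to $r_1$ builds powers of $\rho$ without ever knowing the catalytic contents. Concretely, we first set $r_k \leftarrow r_k - 1$. Then, for $i=k-1$ down to $1$, we wrap the modification of $r_{i+1}$ between a pair of operations $r_i \leftarrow r_i \pm r\cdot r_{i+1}$. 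Unwinding the nested differences gives net shifts $r_i \mapsto \tau_i - \rho^{k-i}$, with the sign conventions chosen so they match the statement. We apply this while $r=\tau+f$, so that $\rho=\tau+f$; this is where $I_2$ sits, between the two stream-dependent instructions.

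\textbf{Extracting $f^k$.} Using the binomial identity
\[
f^k=\sum_{j=0}^{k}\binom{k}{j}(\tau+f)^{k-j}(-\tau)^{j},
\]
we let $r_o$ accumulate terms of the form $\pm\binom{k}{j}\,r^{j}\cdot r_{j}$ (and $r^k$ at the ends), once before the powers are built and once after. Each term is evaluated in the phase where $r$ equals $\tau$ (in $I_1$ or $I_3$) or $\tau+f$ (in $I_2$), as appropriate. Differencing then replaces each unknown $\tau_j$ by the increment $(\tau+f)^{k-j}$. The powers $r^{j}$ multiplying the $r_j$ are themselves only available through the catalytic registers, and they should be produced by the same nested difference trick. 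For this reason I would prove the stronger inductive statement that for every $j\le k$ there are transparent programs adding $\pm\rho^{j}\cdot(\text{register})$ into a target register, and then instantiate it.

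\textbf{Main obstacle.} The delicate step is the bookkeeping that makes every $\tau$- and $\tau_i$-dependent cross term cancel in $r_o$ while leaving each $r_i$ shifted by exactly $-(\tau+f)^{k-i}$ rather than restored. I would first work out $k=1,2$ explicitly to fix the signs and the placement of instructions among $I_1,I_2,I_3$, and then state the induction hypothesis in precisely that form.
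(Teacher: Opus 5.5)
Your plan is correct, and its skeleton is the paper's (Appendix~A):
\begin{itemize}
\item in $I_1$, while $r=\tau$, add $\sum_{j}c_j r^j r_j$ to $r_o$, where $c_j=(-1)^j\binom{k}{j}$;
\item in $I_2$, while $r=\tau+f$, shift each $r_j$ by $-r^{k-j}$ and add $r^k$ to $r_o$;
\item in $I_3$, subtract the same sum again.
\end{itemize}

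You differ only in insisting that powers of $\rho$ be built by nested differences of bilinear instructions $r_a\leftarrow r_a\pm r\cdot r_b$. That self-imposed restriction is the sole source of the induction and of the ``delicate bookkeeping'' you defer. Your own definition does not force it: $r_i\leftarrow r_i-r^{k-i}$ and $r_o\leftarrow r_o\pm c_j r^j r_j$ are transparent instructions in your sense, since the target is not an argument of $g$. These are exactly the instructions the paper uses, evaluating the powers directly with $O(1)$ extra regular registers.

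With those instructions there are no hidden cross terms, and the whole verification is one line:
\[
r_o=(\tau+f)^k+\sum_{j=1}^k c_j\tau^j\tau_j-\sum_{j=1}^k c_j\tau^j\bigl(\tau_j-(\tau+f)^{k-j}\bigr)=\sum_{j=0}^k\binom{k}{j}(-\tau)^j(\tau+f)^{k-j}=f^k .
\]
Two placement points follow from this line:
\begin{itemize}
\item The only term evaluated while $r=\tau+f$ is the single $r^k$ in $I_2$.
\item The $j=k$ term $(-\tau)^k$ arises automatically from differencing against the shift $r_k\mapsto\tau_k-1$, so it must not also be added as an explicit $r^k$ in $I_1$ or $I_3$.
\end{itemize}

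Your route would buy a program using only bilinear instructions, as in the original transparent-program setting of Buhrman et al. Your nested chain for the shifts is correct: it uses $2k-1$ instructions and yields $r_i=\tau_i-\rho^{k-i}$. The products $\tau^j r_j$ can also be produced obliviously, at a cost of $O(2^j)$ instructions per term. To do so you must route the auxiliary chain through the other $j-1\le k-1$ registers $r_i$, because the lemma allows no further registers. For the lemma as stated, none of this machinery is needed.
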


For convenience, we give in \autoref{app:powering} the algorithm used in \cite{buhrman2014computing} to prove \autoref{Powering Lemma}.

\begin{proof}[Proof of \autoref{thm:calculating F_k}]
Let $r_1,r_2,\ldots,r_n$ and $\{r_{i,j}\}_{i \in [n],\, j \in [k]}$ be catalytic registers, and let $r_o$ be a regular output register. Given \autoref{Powering Lemma}, computing $F_k(\bsig)$ is straightforward. We simply run the algorithm of \autoref{Powering Lemma} in parallel for each item $i \in [n]$, using the frequency $f_i$ in place of the value $f$ appearing in that lemma.

All of these parallel executions share the same output register $r_o$, so their contributions accumulate to
\(
\sum_{i=1}^n f_i^k = F_k(\bsig).
\)
On the other hand, each execution uses its own catalytic registers, i.e. for the computation corresponding to $f_i$, we use $r_i$ in the role of the register $r$, and $r_{i,1},r_{i,2},\ldots,r_{i,k}$ in the roles of the registers $r_1,r_2,\ldots,r_k$. Denote by $\tau_i$ the initial value of $r_i$ and $\tau_{i,j}$ the initial value of $r_{i,j}$.

The resulting algorithm is given in \autoref{alg:calculating moments}. There, for $t\in \{1,2,3\}$, $I_t^i$ denotes the program $I_t$ applied to the registers dedicated to $f_i$. Observe that before the two additional passes mentioned in line~\ref{line:f_k comp two more pass line} the values of the registers $r_1,r_2,\ldots,r_n$ are restored, but each register $r_{i,j}$ still contains $\tau_{i,j} - (\tau_i + f_i)^{k-j}$, thus two more passes are needed.
\end{proof}

\begin{algorithm}[H] 
\caption{4-Pass $F_k$ Computation} \label{alg:calculating moments}
\begin{algorithmic}[1] 
\State{Initial State: $r_o = 0 \; \forall i\in \{1,2, \ldots, n\},\; r_i = \tau_i, \; c_i= (-1)^{i}\binom{k}{i}, \forall j\in \{1, \ldots, k\}, r_j^i = \tau_j^i\; $} 
\State{ $\forall i\in [n]$, run $I_1^i$}
\State \textbf{for} every $x_\ell \in \bsig$ \textbf{do} $r_{x_\ell} \gets r_{x_\ell} + 1$ \Comment{1st pass}
\State $\forall i\in [n]$, run $I_2^i$
\State \textbf{for} every $x_\ell \in \bsig$ \textbf{do} $r_{x_\ell} \gets r_{x_\ell} - 1$ \Comment{2nd pass}
\State $\forall i \in [n]$, run $I_3^i$
\State\label{line:f_k comp two more pass line}Execute the inverse of every operation in reverse order  without changing $r_o$ using two more passes to restore $r_1,\ldots,r_n$.
\State \Return $r_o$
\end{algorithmic}
\end{algorithm}

\paragraph*{Polynomial Evaluation by Powering}
Alekseev et al.\ \cite{calcPolyCat25} show how to evaluate a multivariate polynomial of degree $k$ over $n$ variables by representing it as a linear combination of  powers ($\le k$) of linear functions of the variables (so called Waring representation \cite{landsberg2012tensors}). 
However, due to the large number of coefficients required for this representation, their program is designed for a specific polynomial. It does not receive a representation of the polynomial as input.
Their implementation requires catalytic memory of size exponential in $k$.
Similarly to our catalytic streaming implementation of the computation of $F_k$, we can implement this polynomial evaluation scheme in $4$  passes in our model. We will get a streaming algorithm for a \textbf{specific} multivariate polynomial $P$ that, given a stream,
evaluates $P(f_1,f_2,\ldots,f_n)$
 (it cannot get a polynomial as input).
Furthermore, it requires exponential in $k$ catalytic memory.

\section{Polynomial of Degree \texorpdfstring{$k$}{k} Using \texorpdfstring{$k+1$}{k+1} Passes}
\label{sec:poly}

In this section we show how to evaluate polynomials over the frequencies of items in a stream. We assume we get the polynomial's coefficients in our input tape, like we get $m$ and $n$. The following is this section's main result.

\begin{theorem}
\label{thm:polyk}
        Let $\bsig=x_1,x_2,\dots,x_m$ be a stream over $[n]$, and let $f_1,f_2,\dots,f_n$ denote the corresponding frequencies. Then there exists a catalytic streaming algorithm that for every multivariate polynomial $P \in \mathbb{F}[y_1,y_2,\dots,y_n]$ of total degree $k$,\footnote{$\mathbb{F}[y_1,y_2,\dots,y_n]$ is the ring of polynomials over the field $\F$ with the variables $y_1,y_2,\ldots,y_n$. The coefficients of $P$ are given to the algorithm on the input tape. }  evaluates \(
        P(f_1,f_2,\dots,f_n)
        \) in $k+1$ passes. This algorithm uses $O(1)$ regular registers and $n$ catalytic registers. It works over any ring $Z_q$ that contains inverses to $2,3,\dots,k$, to evaluate $P(f_1,f_2,\ldots,f_n)$ over the integers it is enough to take $q = O(\alpha \cdot m^k)$ where $\alpha \in \mathbb{Z}$ is the largest coefficient of any monomial.
\end{theorem}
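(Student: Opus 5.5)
The plan is to use the $n$ catalytic registers $r_1,\dots,r_n$, with arbitrary initial contents $\tau_1,\dots,\tau_n$, as a shifted copy of the frequency vector. All passes except the last add the frequency vector once more, and between passes we read the registers and evaluate polynomials on them. Concretely, in each of the first $k$ passes we execute $r_{x_\ell}\gets r_{x_\ell}+1$ for every $x_\ell\in\bsig$. Hence, for $j=0,1,\dots,k$, the register vector equals $\tau+j f$ after the $j$-th pass (with $j=0$ meaning before the first pass). Here $\tau=(\tau_1,\dots,\tau_n)$ and $f=(f_1,\dots,f_n)$. In the $(k+1)$-st pass we execute $r_{x_\ell}\gets r_{x_\ell}-k$. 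This restores every $r_i$ to $\tau_i$, which gives exactly $k+1$ passes, and the catalytic registers are never modified in any other way.

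To extract $P(f)$ from evaluations at the shifted points, write $P=\sum_{d=0}^k P_d$ with $P_d$ homogeneous of degree $d$. For each $d$ consider the univariate polynomial $Q_d(t)\coloneqq P_d(\tau+tf)$. It has degree at most $d$ in $t$, and its coefficient of $t^d$ is exactly $P_d(f)$, because the $\tau$ terms only contribute to lower powers of $t$. The $d$-th finite difference of a polynomial of degree at most $d$ equals $d!$ times its leading coefficient, so
\[
\sum_{j=0}^{d}(-1)^{d-j}\binom{d}{j}\,P_d(\tau+jf)\;=\;d!\,P_d(f).
\]
This is an identity of integer polynomials in the variables $\tau,f$ (it follows from $\sum_j(-1)^{d-j}\binom{d}{j}j^e=d!\,S(e,d)$, which is $0$ for $e<d$ and $d!$ for $e=d$). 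Therefore it holds over any commutative ring $Z_q$. Since $Z_q$ contains inverses of $2,\dots,k$, we can divide by $d!$. Summing over $d$ gives
\[
P(f)=\sum_{j=0}^{k}\;\sum_{d=j}^{k}\frac{(-1)^{d-j}\binom{d}{j}}{d!}\,P_d(\tau+jf).
\]
The algorithm therefore proceeds as follows. At stage $j$, when the registers hold $\tau+jf$, it adds the inner sum to a regular output register $r_o$. It computes $P_d(r_1,\dots,r_n)$ by scanning the monomials of $P$ on the input tape. For each monomial it keeps a running product in one regular register, multiplying in the appropriate $r_i$'s (read-only access to the catalytic registers) and the coefficient. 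It then multiplies by the scalar $(-1)^{d-j}\binom{d}{j}(d!)^{-1}$, which is computable from $j$, $d$ and $k$ in $O(1)$ registers, and accumulates into $r_o$. All of this uses $O(1)$ regular registers (a monomial pointer, a product, a scalar, $j$, $d$, and $r_o$), each of $O(\log q)$ bits.

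Finally, for exactness over the integers, note that $|P(f)|$ is at most the number of monomials times $\alpha m^k$. Choosing a prime $q$ larger than $k$ and larger than twice this bound, so that all needed inverses exist and the signed integer value is recovered from its residue, gives $q=O(\alpha m^k)$ as stated; the number of monomials is absorbed into the constant. The main step to get right is the finite-difference identity: one must check that the arbitrary offset $\tau$ cancels exactly, even though we never know $\tau$. This is guaranteed because the identity holds for every value of $\tau$ and the contribution of $\tau$ only affects lower-order coefficients of $Q_d$. The other point to verify is that restoration is achieved by the single subtractive pass without undoing any of the read-only evaluations.
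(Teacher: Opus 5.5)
Your proposal is correct and is essentially the paper's proof. You use $k$ incrementing passes followed by one pass subtracting $k$, weight the evaluation at register state $\tau+jf$ by $(-1)^{d-j}\binom{d}{j}/d!$, and the offset $\tau$ cancels by the Stirling-number identity. You group by homogeneous degree and use the leading coefficient of $P_d(\tau+tf)$, whereas the paper weights each monomial by its own degree and expands into cross terms; this is the same computation.

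One small repair is needed in the size bound. The number of monomials is not a constant (it can be as large as $\binom{n+k}{k}$), so it cannot be absorbed into the $O(\cdot)$. Instead use $\sum_{\deg M=d}M(f_1,\ldots,f_n)\le (f_1+\cdots+f_n)^d=m^d$, which gives $|P(f_1,\ldots,f_n)|\le \alpha\sum_{d=0}^{k}m^d=O(\alpha m^k)$ directly.
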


\subsection{Warm Up: Degree 2 Polynomials}
To warm up, we first show how to evaluate degree $2$ polynomials. 

Let $\bsig = x_1, x_2, \dots, x_m$ be a stream of elements from the set $[n]$ with frequencies $f_1, f_2,\ldots f_n $ and let $P \in \mathbb{F}[y_1,y_2,\dots,y_n]$  be a multivariate polynomial of degree $2$. Let
\begin{equation*}
    P(f_1, f_2,\ldots, f_n) = \sum_{1\leq i\leq j \leq n} a_{i,j}f_i f_j \, ,
\end{equation*}
for some $\{a_{i,j}\} \subseteq \F$, i.e. we assume $P$ only contains degree $2$ monomials. Solving the problem with respect to such polynomials is easily generalized, since adding degree-1 monomials and constants is trivial using $O(\log m)$ bits of regular memory.

\medskip

We allocate registers $r_1, r_2,\ldots, r_n$ in the catalytic memory, each of size $2\log m$ bits. We denote the initial values inside these registers by $\tau_1, \tau_2, \ldots, \tau_n$. In addition, we allocate $2 \log m$ bits of regular memory as our output. We give this register a name, $r_o$, and think of it as the output register. The algorithm goes as follows.

\begin{enumerate}
        \item \label{item:d2 poly alg first set} Set $r_o \leftarrow P(r_1, r_2,\ldots, r_n)$.
        \item \label{item:d2 poly alg first pass} During the first pass of the stream, for every item $x_j$ received, increment the register $r_{x_j}$ by one. After this pass, $r_i$ holds $\tau_i + f_i$. 
        \item \label{item:d2 poly alg second set} Set $r_o \leftarrow r_o - 2 P(r_1, \ldots, r_n)$.
        \item During the second pass, we repeat Step~\eqref{item:d2 poly alg first pass}. At the end of this pass, $r_i$ holds $\tau_i + 2f_i$.
        \item \label{item:d2 poly alg third set} Set $r_o \leftarrow r_o + P(r_1, \ldots, r_n)$. 
        \item During the third pass, we reverse our previous operations, i.e. for every $x_j$ received, we subtract $2$ from its corresponding register. 
        \item Return $r_o /2$. 
    \end{enumerate}

To establish correctness, we follow the contents of $r_o$ along the execution of these steps. We use the following definitions.
\begin{align*}
    J &= \sum_{1\leq i\leq j \leq n} a_{ij} \tau_i \tau_j  &&(\text{the junk term})\, , \\
    C &= \sum_{1\leq i\leq j \leq n} a_{ij} (\tau_i f_j + f_i \tau_j) &&(\text{the cross term}) \, , \text{and} \\
    V &= \sum_{1\leq i\leq j \leq n} a_{ij} f_i f_j &&(\text{the target term}).
\end{align*}
We assume that the regular memory is initialized to $\0 =0^s$, hence initially $r_o  = 0$.
Following Step~\eqref{item:d2 poly alg first set},
\[r_o = P(\tau_1, \tau_2, \ldots, \tau_n) =  \sum_{1\leq i\leq j \leq n} a_{i,j}\tau_i \tau_j = J \, .\]
  Following Step~\eqref{item:d2 poly alg second set}, 
\begin{align*}
 r_o &= J - 2P(r_1, r_2, \ldots, r_n )= J - 2P(\tau_1 + f_1, \tau_2 + f_2, \ldots, \tau_n + f_n )  \\
 &  = J - 2\sum_{1\leq i\leq j \leq n} a_{i,j}(\tau_i + f_i)( \tau_j + f_j)=J -2(J+C+V) \, .
 \end{align*}
 Following Step~\eqref{item:d2 poly alg third set}, 
\begin{align*}
 r_o &  = J - 2(J+C+V) +\sum_{1\leq i\leq j \leq n} a_{i,j}(\tau_i + 2f_i)( \tau_j + 2f_j)\\
&  = J-2(J+C+V) + (J+ 2C+4V) = 2V
\end{align*}
It follows that when Step~\eqref{item:d2 poly alg third set} ends, $r_o = 2V = 2P(f_1,f_2,\ldots,f_n)$. Hence $r_o/2$ is exactly $P(f_1, f_2,\ldots, f_n)$, which is the desired output.

\subsection{Proof of \autoref{thm:polyk}}
In order to prove the theorem, we first handle the case of evaluating a single monomial.
\begin{lemma}
 \label{monomial-lemma}
    Let $\bsig=x_1,x_2,\dots,x_m$ be a stream over $[n]$, and let $f_1,f_2,\dots,f_n$ denote the corresponding frequencies. 
    There exists a catalytic streaming algorithm that given any monomial $M \in \mathbb{F}[y_1,y_2,\dots,y_n]$ of total degree $k$, evaluates $M(f_1,f_2,\ldots,f_n)$ in $k+1$ passes.
   The algorithm uses $O(1)$ regular registers and $n$ catalytic registers. This algorithm works over any ring $Z_q$ that contains inverses to $2,3,\dots,k$. To evaluate $M(f_1, \ldots, f_n)$ over the integers, it is enough to take $q = O(m^k)$.   
\end{lemma}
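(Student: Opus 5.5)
Write the monomial as $M(y_1,\ldots,y_n)=a\,y_{i_1}y_{i_2}\cdots y_{i_k}$, listing variables with multiplicity. I generalize the degree-2 warm-up, using a finite-difference identity in place of the weights $1,-2,1$. The algorithm keeps only the $n$ catalytic registers $r_1,\dots,r_n$, initialized to $\tau_1,\dots,\tau_n$, and a regular output register $r_o$. Before the first pass and after each of the first $k$ passes it adds $c_t\,M(r_1,\dots,r_n)$ to $r_o$. Each of the first $k$ passes increments $r_{x_j}$ by one for every arriving item. Hence after $t$ passes $r_i=\tau_i+tf_i$, and the evaluation made at that moment is $M(\tau+tf)$, where $\tau+tf$ denotes the vector $(\tau_1+tf_1,\ldots,\tau_n+tf_n)$. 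The $(k+1)$-st pass subtracts $k$ from $r_{x_j}$ for every item, which restores every $r_i$ to $\tau_i$. So the scheme uses exactly $k+1$ passes, and each evaluation of $M$ on the registers needs only $O(1)$ regular registers, since we multiply the $k$ factors one at a time into an accumulator.

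For the coefficients I take $c_t=(-1)^{k-t}\binom{k}{t}$ for $t=0,\dots,k$, the $k$-th forward difference. The key step is the identity
\[
\sum_{t=0}^{k}(-1)^{k-t}\binom{k}{t}\,M(\tau+tf)=k!\,M(f).
\]
The function $g(t)=M(\tau+tf)=a\prod_{\ell=1}^{k}(\tau_{i_\ell}+tf_{i_\ell})$ is a polynomial in $t$ of degree at most $k$, and its $t^k$ coefficient is $a\prod_\ell f_{i_\ell}=M(f)$. The $k$-th forward difference $\Delta^k g(0)$ annihilates every power $t^d$ with $d<k$ and sends $t^k$ to $k!$. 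So all junk and cross terms involving the $\tau_i$ cancel, and $r_o=k!\,M(f)$ after the last evaluation. This identity is a purely algebraic fact over $\mathbb{Z}$, so it holds in any $Z_q$. To finish, the algorithm multiplies $r_o$ by the inverse of $k!$, which exists because $2,3,\dots,k$ are invertible in $Z_q$.

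It remains to check the catalytic and size constraints. The registers end at $\tau_i$ exactly, because the last pass removes $kf_i$ from each; all arithmetic is mod $q$, so wraparound is harmless. The regular memory holds $r_o$, a loop counter, the index $t$, a running product and the coefficient $c_t$, which is $O(1)$ registers of $O(\log q)$ bits. For integer output, $|M(f)|\le |a|\,m^k$, so it suffices to choose $q$ coprime to $k!$ and larger than twice this bound, which is $O(m^k)$ for fixed coefficients, and to read the result as a signed residue.

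The main obstacle is the difference identity, in particular showing that the cross terms vanish for arbitrary $\tau$ and that the leading coefficient is exactly $M(f)$ even when variables repeat. The argument via $\deg g\le k$ settles this cleanly. A secondary point is that the regular register counts stay constant, which holds because each $c_t$ can be computed on the fly from $k$ and $t$.
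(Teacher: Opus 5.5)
Your proposal is correct and takes the paper's approach: the algorithm is identical to the paper's monomial algorithm, with the same weights $(-1)^{k-t}\binom{k}{t}$ applied before the first pass and after each of the $k$ incrementing passes, followed by one pass that subtracts $k$. Your forward-difference argument on $g(t)=M(\tau+tf)$ is the paper's Stirling-number computation in different language, because the coefficient of $t^{k-\ell}$ in $g$ is $\text{Cross}_\ell(\btau,\boldsymbol{f})$ and $\sum_{i=0}^{k}(-1)^{k-i}\binom{k}{i}i^{d}=k!\,S(d,k)$ is exactly $\Delta^k t^d$ evaluated at $0$.
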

\begin{proof}
Let $r_1, r_2,\ldots, r_n$ be the catalytic registers, where each $r_i$ initially contains the value $\tau_i$, and let $r_o$ be the regular register.
Evaluating $M$ on $(f_1,f_2,\ldots,f_n)$ goes as follows (\autoref{alg:monomial}):

\begin{algorithm}[H]
\caption{Compute Degree-$k$ Monomial $M(f_1, f_2,\ldots,f_n)$ } \label{alg:calc monomial}
\label{alg:monomial}
\begin{algorithmic}[1]
\Input A monomial $M$ and a sequential stream $\bsig$. \Comment{\textbf{Initially}: $r_o = 0, \; \forall i, \; r_i = \tau_i$}
\Output The evaluation $M(f_1,f_2,\ldots,f_n)$.
\State $r_o \gets r_o + (-1)^kM(r_1,r_2, \ldots, r_n)$
\For{$i = 1, \dots, k$} 
\LineFor{every $x_j \in \bsig$}{$r_{x_j} \gets r_{x_j} + 1$}\Comment{The $i$-th pass}
    \State $r_o \gets r_o + (-1)^{k-i} \binom{k}{i} M(r_1, r_2, \ldots, r_n)$
\EndFor
\LineFor{every $x_j \in \bsig$}{$r_{x_j} \gets r_{x_j} - k$}\Comment{The $k+1$-th pass}
\State \Return $\frac{r_o}{k!}$
\end{algorithmic}
\end{algorithm}

Observe, that for every  $1 \leq i \leq k$ and $j \in [n]$, after the $i$-th pass, the value in register $r_j$ is $\tau_j + i \cdot f_j$. Thus, by summing the values added to $r_o$ in every pass, we get
\begin{equation} \label{stirling}
\begin{split}
    r_o &= (-1)^k M(\tau_1, \ldots, \tau_n) + \sum_{i=1}^k (-1)^{k-i} \binom{k}{i} M(\tau_1 + if_1, \ldots, \tau_n + i f_n) \\
    &=\sum_{i=0}^k (-1)^{k-i} \binom{k}{i} M(\tau_1 + i f_1, \ldots, \tau_n + if_n) \, .
\end{split}
\end{equation}
Since the exact degree of $M$ is $k$, we can choose $\{i_\alpha\}_{\alpha \in [k]} \subseteq [n]$ (these indices are not necessary distinct) such that $M(y_1,y_2,\ldots,y_n) = \prod_{\alpha = 1}^k y_{i_{\alpha}}$. Therefore, by expanding $M$ in Equation~\eqref{stirling} in that way we get
\begin{equation*}
    r_o = \sum_{i=0}^k (-1)^{k-i} \binom{k}{i} \prod_{\alpha=1}^k (\tau_{i_\alpha} + i f_{i_\alpha}) = \sum_{i=0}^k (-1)^{k-i} \binom{k}{i}\sum_{\ell = 0}^k i^{k-\ell}\text{Cross}_\ell(\btau,\boldsymbol{f}) 
\end{equation*}
where $\btau=\tau_1,\tau_2,\ldots,\tau_n$, $\boldsymbol{f}=f_1, f_2,\ldots,f_n$, and
\begin{equation*}
\text{Cross}_\ell(\btau, \mathbf{f}) =  \sum_{\substack{\beta =\{\beta_1, \ldots \beta_\ell\} \subseteq [k] \\ \gamma_1, \ldots, \gamma_{k-\ell = [k]\setminus \beta} }} \tau_{\beta_1}\cdots\tau_{\beta_\ell}\cdot f_{\gamma_1}\cdots f_{\gamma_{k-\ell}}
\end{equation*}
where the sum is over all partitions of $\{i_\alpha\}_{\alpha \in [k]}$ into two sets, $\{\beta_j\}_{j \in [\ell]}$ and $\{\gamma_t\}_{t \in [k -\ell]}$.

By changing the order of summation we get that

\begin{equation}
\label{eq:reorder}
r_o = \sum_{\ell = 0}^k\text{Cross}_\ell(\btau, \boldsymbol{f})\sum_{i=0}^k (-1)^{k-i} \binom{k}{i} i^{k-\ell}  \, .
\end{equation}

Recall the definition of Stirling numbers of the second kind:
\begin{equation*}
    S(n, k) = \frac{1}{k!} \sum_{i=0}^k (-1)^{k-i} \binom{k}{i} i^n \, .
\end{equation*}
A key property is that $S(n, k) = 0$ whenever $n < k$, and $S(k, k) = 1$.
Using these identities to rewrite Equation~\eqref{eq:reorder} we get
\begin{equation*}
    r_o = \sum_{\ell=0}^k \text{Cross}_{\ell}(\btau,\boldsymbol{f}) \cdot k!\cdot S(k-\ell,k) = k! \cdot \text{Cross}_0(\btau,\boldsymbol{f}) = 
 k! \cdot \prod_{\alpha=1}^k  f_{i_\alpha} = k! \cdot M(f_1,f_2,\ldots,f_n) \, .
\end{equation*}
Therefore $\frac{r_o}{k!} = M(f_1,f_2,\ldots,f_n)$ and the returned value is correct.
\end{proof}

We are now ready to prove \autoref{thm:polyk}.

\begin{proof}[Proof of \autoref{thm:polyk}]
 Denote the set of monomials with nonzero coefficient in $P(\bm{y})$ by $\text{Mon}(P)$ and let $a_M \in Z_q$ be the coefficient of the monomial $M(\y)$ in $P(\y)$. Moreover, let $r_1,r_2,\ldots,r_n$ be catalytic registers and $r_o$ a regular register. Algorithm $3$ shows how to evaluate $P$ on $f_1,f_2,\ldots,f_n$.

\begin{algorithm} [H]
\caption{Compute Degree-$k$ Polynomials}
\label{alg:polyk}
\begin{algorithmic}[1]
\Statex \Input A polynomial $P$ and a sequential stream $\bsig$. \Comment{\textbf{Initially}: $r_o = 0, \; \forall i, \; r_i = \tau_i$}
\Output The evaluation $P(f_1,f_2,\ldots,f_n)$.
\For{$i = 1, \dots, k$} \Comment{The $i$-th pass}
    \LineFor{every $x_j \in \bsig$}{$r_{x_j} \gets r_{x_j} + 1$}
    \For{$M \in \text{Mon}(P)$}
            \State $r_o \gets r_o + a_M(-1)^{\deg(M)-i} \binom{\deg(M)}{i} M(r_1, r_2,\ldots, r_n) \; \big/ \,\deg(M)!$
    \EndFor
\EndFor
\LineFor{every $x_j \in \bsig$}{$r_{x_j} \gets r_{x_j} - k$} \Comment{The $k+1$-th pass}
   \For{$M \in \text{Mon}(P)$}
        \State $r_o \gets r_o + a_M(-1)^{\deg(M)}  M(r_1, \ldots, r_n) \; \big/ \,\deg(M)!$
    \EndFor
\State \Return $r_o$
\end{algorithmic}
\end{algorithm}

The idea is to evaluate in parallel all monomials of $P$ using the same set of registers. Since summation is commutative different monomials do not interfere with one another. The correctness of \autoref{alg:polyk} essentially follows as in the proof of \autoref{monomial-lemma}.
\end{proof} 


\subsection{Polynomial Evaluation over a Field Using Primitives Roots of Unity}

In Cook and Mertz~\cite{CMtree24}, the authors show how to evaluate any multivariate polynomial $P(y_1,\ldots,y_n)$ of degree $k$ over a finite field $\F_q$  under the assumption that $k +1 < q$. 
Their algorithm requires a primitive root of unity of 
order $e \ge k+1$. Note that the smallest $e$ for which such a root exists is the smallest divisor of $q-1$ which is larger than $k$. Their algorithm reads each input variable $e$ times, and uses $\log q$ bits of regular memory together with $n\log q$ bits of catalytic memory. As we did for $F_k$ we can modify this algorithm to a catalytic streaming algorithm that evaluates $P(f_1,\ldots, f_n)$ in $e\ge k+1$ passes over the stream. To evaluate $P(f_1,\ldots, f_n)$ over the integers we have to use a field of characteristic $\Omega(m^k)$. Note that it is not clear how to find a field of characteristic $O(m^k)$ that contains a primitive root of unity of order $k+1$. Hence this algorithm may require more than $k+1$ passes. The algorithm presented in this section, which is arguably simpler, always evaluates $P$ in exactly $k+1$ passes and requires only milder assumptions on the underlying ring.

Goldreich \cite{goldreichTreeEval} simplified the Cook-Mertz procedure.
He observed that Cook and Mertz in fact interpolate the univariate polynomial
$f(x)=P(x\cdot \tau_1+f_1,\ldots,x\cdot\tau_n + f_n)$ at powers of a primitive root of unity, but one can also interpolate at other points. (Since we only need to extract $f(0)$ at the end.)
This removes the need for a primitive root of unity, however, when we attempt to implement it in the catalytic streaming model it would require twice as many passes. This  happens
since to evaluate at arbitrary values of $x$ requires two passes per value (we need to subtract away the frequencies $f_i$, multiply the ``noise'' by a different value of $x$ and then add back the frequencies).  But powers of a primitive root of unity allow to get away with one pass per value of $x$.


\begin{remark}
    Using this method we get a $3$-pass algorithm computing $F_2$ using $O(\log m)$ bits of regular memory.  
\end{remark}

\begin{remark}
    In this section we have evaluated polynomials over the frequencies of the elements in the stream. Note that this can be easily generalized to evaluating $\ell$-variate polynomials over every set of functions of the steam elements $\{g_1(\bsig), \ldots, g_\ell(\bsig)\}$, such that we can compute $r_i\leftarrow g_i(\bsig)$ for every $i$ in one pass and within our space bounds.
\end{remark}

\section{Lower Bounds} \label{sec:lower bounds}
Standard streaming lower bounds are typically proved via reductions from communication complexity. In this section, we apply this approach to the catalytic setting. Specifically, in \autoref{sec:imp}, we leverage the recently introduced framework of catalytic communication complexity~\cite{catalyticCommunication} to establish lower bounds for the catalytic streaming model.

However, catalytic communication results cannot establish lower bounds for multi-pass catalytic streaming. As we show below, this is because a modified inner product protocol from \cite{catalyticCommunication} actually solves \textsc{Set-Disjointness}. Using alternative techniques, we instead prove that our 3-pass algorithm for $F_2$ is pass-optimal among a family of ``natural'' algorithms.

\subsection{Implication of Catalytic Communication}
\label{sec:imp}
In this section, we use results from~\cite{catalyticCommunication} to establish a lower bound for one-pass catalytic streaming algorithms. We then adapt their method for computing inner product over $\text{GF}_2$ to obtain a catalytic communication protocol for \textsc{Set-Disjointness} in three rounds of communication that uses only $O(\log n)$ bits of regular memory. This latter result shows that the typical approach to prove streaming lower bounds via reductions from \textsc{Set-Disjointness} cannot yield lower bounds in our new model for algorithms with more than one pass.

Briefly, the communication model of \cite{catalyticCommunication} is as follows (for the exact definition, see \cite[Definition~$1$]{catalyticCommunication}).
Alice, holding an input $\x\in \{0,1\}^n$, and Bob, holding an input  $\y\in \{0,1\}^n$ communicate to compute a function $f(\x,\y)$. They exchange messages on $c$  bits of catalytic memory (arbitrarily initialized) and $s$ bits of regular memory (initialized to $\0$), note that the message size in every round is fixed.  In their protocol the last player sending a message should send/restore the original contents of the catalytic memory. The player that gets this message should output $f(\x,\y)$.

A lower bound on the amount of regular memory required by a catalytic streaming algorithm for computing the frequency moment $F_k$ follows from the following three facts: (1) We observe that the standard reduction from the \textsc{Set-Disjointness} problem in communication complexity to the streaming problem of computing the frequency moment also works in the catalytic setting. 
The \textsc{Set-Disjointness} problem in communication complexity asks how many bits of communication Alice, holding $x\in \{0,1\}^n$,  and Bob, holding $y\in \{0,1\}^n$,  have to exchange in order to compute the function 
\begin{equation*}
    \disj_{n}(x,y) = \begin{cases} 1 \hspace{15pt} \forall i \in [n]: x_i =1 \Rightarrow  y_i = 0  
        \\
        0 \hspace{15 pt} \text{otherwise}
    \end{cases} \; .
\end{equation*}

Specifically, our lower bound relies on three facts:
(1) A $p$-pass catalytic streaming algorithm for a frequency moment $F_k$ yields a $2p$-round catalytic communication protocol for \textsc{Set-Disjointness} ($\disj_n$) with the same catalytic memory and $O(\log n)$ regular memory (a standard reduction provided in \autoref{app:dis}).
(2) Any catalytic communication protocol with fewer than three rounds can be simulated without catalytic memory at a similar regular communication cost \cite[Proposition~3]{catalyticCommunication}.
(3) Computing $\disj_n$ strictly requires $\Omega(n)$ standard communication \cite{kal92setDisj, razborovSetDisj, AMS}. Combining these, we immediately obtain the following:

\begin{proposition}
There is no one-pass catalytic streaming algorithm for frequency moments with sublinear regular memory.
\end{proposition}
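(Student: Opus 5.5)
The plan is to argue by contradiction, chaining the three facts listed just before the statement. Fix $k\ge 2$ and suppose $A$ is a one-pass catalytic streaming algorithm that computes $F_k$ exactly. Assume $A$ uses $s=o(n)$ bits of regular memory and $c$ bits of catalytic memory, where $c$ is arbitrary. First I would apply the reduction of fact (1), detailed in \autoref{app:dis}. Alice, holding $\x\in\{0,1\}^n$, builds the stream prefix $\bsig_A$ consisting of the items $i$ with $x_i=1$. Bob builds the suffix $\bsig_B$ consisting of the items $i$ with $y_i=1$. On the concatenation $\bsig_A\bsig_B$ every frequency $f_i$ lies in $\{0,1,2\}$, and
\[
F_k(\bsig_A\bsig_B)=|\x|+|\y|+(2^k-2)\,|\{i: x_i=y_i=1\}|.
\]
Alice sends her $|\x|$ together with $A$'s state as one message: the $s$ regular bits plus the current contents of the catalytic tape. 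Bob then continues the simulation on $\bsig_B$ from that state. Since $A$ has finished its single pass, it has restored the catalytic memory to $\btau$, so Bob holds the restored tape and can output $\disj_n(\x,\y)=1$ exactly when $F_k=|\x|+|\y|$. This is one pass and hence a protocol with at most $2p=2$ rounds. It uses catalytic memory $c$ and regular memory $s+O(\log n)$.

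Second, I would invoke fact (2), \cite[Proposition~3]{catalyticCommunication}. It says that a catalytic protocol with fewer than three rounds can be simulated by an ordinary protocol whose communication is comparable to the regular memory $s+O(\log n)$. The intuition is that in a one- or two-message protocol the final message must restore $\btau$. Because the catalytic content is arbitrary and known to the sender, only the regular part carries information about the inputs. The result would then be a standard deterministic protocol for $\disj_n$ with cost $O(s+\log n)$.

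Third, fact (3) gives the standard $\Omega(n)$ lower bound for $\disj_n$. Combining it with the previous step forces $s=\Omega(n)$, which contradicts sublinearity. Because the lower bound of \cite{kal92setDisj,razborovSetDisj} also holds for randomized protocols, the same argument covers randomized algorithms as well.

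I expect the main obstacle to be matching the reduction precisely to the round and restoration conventions of \cite[Definition~1]{catalyticCommunication}. The issue is that the last speaker must send the restored tape and the receiver must output the answer, whereas in our reduction Bob both restores and knows the answer. My first fix would be for Bob to send the restored tape, together with the one-bit answer placed in regular memory, back to Alice. This yields a two-round protocol, still below three rounds, so Proposition~3 continues to apply. A second point to check is that Proposition~3's cost bound depends only on the regular memory and not on $c$; if it does, the contradiction holds for any polynomial, indeed any, amount of catalytic memory.
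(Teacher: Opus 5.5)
Your proposal is correct and follows the same route as the paper. Fact (1) turns a one-pass algorithm into a two-round catalytic protocol for $\disj_n$. Your fix of having Bob return the restored tape with the answer bit matches the paper's Alice--Bob--Alice simulation. Then \cite[Proposition~3]{catalyticCommunication} collapses this to an ordinary protocol of cost $O(s+\log n)$, and the $\Omega(n)$ bound for $\disj_n$ gives the contradiction.

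The closing remark about randomized algorithms overreaches. It would need a randomized analogue of Proposition~3, not just the randomized lower bound for $\disj_n$. It is not needed for the statement.
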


\begin{proof}
Assume there exists a 1-pass catalytic streaming algorithm with sublinear regular memory. By Fact (1), this yields a 2-round catalytic protocol for $\disj_n$ with sublinear regular communication. By Fact (2), this implies a standard 2-round protocol for $\disj_n$ with sublinear communication, which contradicts Fact (3).
\end{proof}

While similar reductions can rule out other one-pass catalytic streaming algorithms, this approach fails for multiple passes (which induce protocols with more than two rounds). This barrier is inherent: in \autoref{app:inner product}, we show that a simple modification of the $\text{GF}_2$ inner-product protocol from \cite[Proposition~5]{catalyticCommunication} yields a 3-round catalytic protocol for $\disj_n$. Thus, multi-pass lower bounds cannot be obtained via this route.

\subsection{A Restricted Lower Bound for Moments in Two Passes}\label{sec:restricted LB}

In this section, we characterize a set of properties of catalytic streaming algorithms, which seem natural, such that every algorithm that have these properties cannot calculate $F_k$, for $k \ge 2$, in two passes.

\begin{assumption}[Restricted two-pass Catalytic Streaming Algorithm]
\label{restricted}
We denote by $C$ the catalytic memory, $r_i$ is a catalytic register associated with item $i$, and $M$ is the regular memory.
A restricted two-pass catalytic streaming algorithm obeys the following structure: 

\begin{enumerate}
\item \textbf{Initial Update (Before we see the stream)}:  $M \leftarrow M + W_s(\bm{C})$, where $W_s$ is some function of the entire catalytic memory $\bm{C}$. It returns a vector that we use in order to update the clean memory $M$ additively.
\item \textbf{Pass 1 (Forward):} For each item $x_j$ in the stream, if $x_j=i$ then we set $r_{i} \leftarrow U_{i}(r_i)$, and then $M \leftarrow M + P_i(r_{i})$.
Here $U_i$ is   an item-specific bijection $U_i : \Sigma \rightarrow \Sigma$.
    We require that $r_i \not= r_j$ for $i\not= j$. $P_i$ is also an item specific function that updates $M$ additively after the change to $r_i$.
\item \textbf{Intermediate Update (End of Pass 1):}  $M \leftarrow M + W_m(\bm{C})$. $W_m$ is  a function of the entire catalytic memory $\bm{C}$
that we use to update the clean memory between the two passes.

\item \label{item:2 pass lb assump backward pass} \textbf{Pass 2 (Backward):} To restore the catalytic memory, if $x_j=i$, we set $r_{i} \leftarrow U_i^{-1}(r_{i})$, then $M \leftarrow M + Q_i(r_{i})$.
$Q_i$ is an item specific function that updates $M$ additively after the update to  $r_i$.

\item \textbf{Final Update (End of Pass 2):}  $M \leftarrow M + W_e(\bm{C})$. $W_e$ is  a function of the entire catalytic memory $\bm{C}$
that we use to update the clean memory between following the two passes.
\end{enumerate}
\end{assumption}

\begin{remark}
\autoref{restricted}
 can be generalized to apply to $p$-pass streaming algorithms. If $U_{i,j}$ is the function applied to $r_i$ in pass $j$ when we see item $i$, then we should have 
$U_{i,p}^\ell(\cdots (U_{i,2}^\ell(U_{i,1}^\ell(\tau_i))))=\tau_i$ for every $\ell \leq m$. 
Further generalizations of this assumption  
allocate a set of registers to every element instead of only one register and allow to apply some function $g_{i,j}$ to each $r_i$ after pass $j$ such that
$g_{i,p}(U_{i,p}^\ell(\cdots (g_{i,2}(U_{i,2}^\ell (g_{i,1}(U_{i,1}^\ell(g_{i,0}(\tau_i))))))))=\tau_i$ for every $\ell \leq m$. All our algorithms obey this generalized assumption.
\end{remark}

Now we show that no program that satisfies \autoref{restricted} can compute $F_2$ exactly using only two passes. 
The proof for $F_k$, $k> 2$ is analogous. The intuition is that the catalytic tape behaves as a one time pad with respect to the items frequencies, so, in one pass it is impossible to extract information about them from the catalytic memory.
Formally, we prove this by analyzing the 
discrete derivative of the output as a function of $f_i$.

\subsubsection{The Discrete Derivative Argument}  
Let $\bm{C} = \bm{\mathcal{T}}$ and in particular
let $r_i = \tau_i$ initially for all $i\in [n]$. For every $i \in [n]$ and $j \in [f_i]$ we denote by $U_i^j$ the bijective function $U_i$ applied $j$ times. We analyze the state of $M$ at the end of the algorithm, denoted by $M_e(\btau,\boldsymbol{f})$. 
\[
M_e(\btau,\boldsymbol{f}) = W_s(\bm{\mathcal{T}})+ \sum_{i=1}^n \sum_{j=1}^{f_i} P_i\big(U_i^j(\tau_i)\big) + W_m\big(\boldsymbol{U^{f}(\btau)}\big) +  \sum_{i=1}^n\sum_{j=0}^{f_i-1} Q_i\big(U_i^j(\tau_i)\big) + W_e(\bm{\mathcal{T}})
\] 

where $\boldsymbol{U^{f}(\btau)} = \left(U^{f_1}_1(\tau_1),\ldots, U_n^{f_n}(\tau_n)\right)$. The second term from the right comes from the fact that in the second pass (see Item~\ref{item:2 pass lb assump backward pass}) after seeing item $i$ for the $j$-th time, the value of the register $r_i$ is $U_i^{f_i-j}(\tau_i)$.

For simplicity, for the rest of the proof we denote \[ E_i(\tau_i,f_i) =  \sum_{j=1}^{f_i} P_i\big(U_i^j(\tau_i)\big) + \sum_{j=0}^{f_i-1} Q_i\big(U_i^j(\tau_i)\big)\]

\begin{theorem}

 There is no algorithm satisfying \autoref{restricted}  for which $M_e(\btau,\boldsymbol{f}) = \sum_{i=1}^n f_i^2 $.\footnote{We assume w.l.o.g.\ that in the end of the algorithm we only have $F_2$ on our regular memory.}

\end{theorem}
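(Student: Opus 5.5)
The plan is to exploit the fact that, with the frequency vector $\boldsymbol{f}$ held fixed, the map $\btau \mapsto \boldsymbol{U^{f}(\btau)}$ is a bijection on the catalytic contents. Both the intermediate term $W_m$ and the increments of $E_i$ then depend on $\btau$ and $\boldsymbol{f}$ only through the \emph{current} register values $\bsig' \coloneqq \boldsymbol{U^{f}(\btau)}$. The discrete derivative of $M_e$ in $f_i$ will therefore be a function of $\bsig'$ alone. For $F_2$, however, this derivative must be $2f_i+1$, which genuinely depends on $f_i$. So we get a contradiction as soon as some $f_i$ can take two different values.

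Concretely, I will assume towards a contradiction that $M_e(\btau,\boldsymbol{f})=\sum_i f_i^2$ for every initial content $\btau$ and every frequency vector $\boldsymbol{f}$ realizable by a stream of length at most $m$, with $m\ge 2$. The first step is to fix $i$ and compute the discrete derivative
\[
\Delta_i M_e(\btau,\boldsymbol{f}) \coloneqq M_e(\btau,\boldsymbol{f}+e_i)-M_e(\btau,\boldsymbol{f}).
\]
The terms $W_s(\bm{\mathcal{T}})$ and $W_e(\bm{\mathcal{T}})$ cancel, as do all $E_j$ with $j\neq i$. From the definition of $E_i$ we get
\[
E_i(\tau_i,f_i+1)-E_i(\tau_i,f_i)=P_i\big(U_i^{f_i+1}(\tau_i)\big)+Q_i\big(U_i^{f_i}(\tau_i)\big),
\]
since the telescoping sums differ only in their last terms. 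Writing $\sigma'_j=U_j^{f_j}(\tau_j)$, this equals $G_i(\sigma'_i)\coloneqq P_i(U_i(\sigma'_i))+Q_i(\sigma'_i)$. The $W_m$ contribution is $W_m(\bsig'')-W_m(\bsig')$, where $\bsig''$ is $\bsig'$ with the $i$-th coordinate replaced by $U_i(\sigma'_i)$. Hence
\[
\Delta_i M_e(\btau,\boldsymbol{f})=H_i(\bsig') \quad\text{for a fixed function } H_i \text{ that does not depend on } \boldsymbol{f}.
\]
On the other hand, the assumed correctness gives $\Delta_i M_e=(f_i+1)^2-f_i^2=2f_i+1$.

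The second step uses the bijectivity of each $U_j$. Fix any catalytic content $\bsig'$ and consider the two frequency vectors $\boldsymbol{f}=\0$ and $\boldsymbol{f}=e_i$; both together with $\boldsymbol{f}+e_i$ are realizable because $m\ge 2$. For each of them choose $\btau=\boldsymbol{U^{-f}}(\bsig')$, which is a legitimate initial content since the catalytic tape is arbitrary. In both cases the current register values equal $\bsig'$. The identity above then gives $1=H_i(\bsig')=3$, which is a contradiction over the integers, or over any $Z_q$ in which $2\neq 0$.

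The only delicate point I anticipate is bookkeeping rather than ideas. I must make sure that the second-pass sum in $E_i$ is indexed exactly as in the displayed formula for $M_e$, so that its difference is $Q_i(U_i^{f_i}(\tau_i))$. I must also confirm that the requirement $r_i\neq r_j$ (distinct registers per item) really makes $W_m$'s argument depend on $f_i$ only through coordinate $i$. Finally, I should remark that the same argument handles $F_k$ for $k>2$, since $(f_i+1)^k-f_i^k$ is again non-constant in $f_i$.
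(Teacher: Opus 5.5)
Your proof is correct and shares the paper's skeleton. You take the discrete derivative of $M_e$ in $f_i$, note that $W_s$, $W_e$ and the $E_j$ with $j\neq i$ cancel, and observe that what remains depends on $(\btau,\boldsymbol f)$ only through the catalytic contents after the first pass. The paper fixes the coordinates $j\neq i$ and writes this as $H_i(y)$ with $y=U_i^{f_i}(\tau_i)$; you keep the whole vector $\bsig'$.

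The two arguments differ only in the closing step. The paper draws $\tau_i$ uniformly and uses Claim~\ref{information} to see that $y$ is uniform and independent of $f_i$. It then applies the data processing inequality to rule out $H_i(y)=2f_i+1$.

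You build the collision explicitly instead. Take the pairs (initial content, frequency vector) $(\bsig',\0)$ and $(\btau',e_i)$, where $\btau'$ agrees with $\bsig'$ except that $\tau'_i=U_i^{-1}(\sigma'_i)$. By bijectivity both reach the same post-pass content, which forces $1=H_i(\bsig')=3$. This is the pointwise form of the same ``one-time pad'' fact, but it is more elementary:
\begin{itemize}
\item it needs no probability distribution, and hence no finiteness of the register alphabet;
\item it replaces the paper's somewhat loose quantitative claim about $I(f_i,H_i(y))$ with an exact contradiction;
\item it isolates the one arithmetic requirement on the ring, $2\neq 0$.
\end{itemize}
That requirement is genuinely needed: modulo $2$ one has $F_2\equiv F_1=m$, and a trivial restricted algorithm computes it. The paper's probabilistic phrasing, for its part, is what it reuses in \autoref{proof break} to explain how two post-pass contents $y_1,y_2$ can jointly reveal $f_i$ once a third pass is allowed.

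One caveat you share with the paper: $\boldsymbol f$ and $\boldsymbol f+e_i$ have different sums. Both arguments therefore assume the same functions $W_s,W_m,W_e,U_i,P_i,Q_i$ are correct for streams of different lengths. Your explicit ``length at most $m$'' hypothesis at least makes this assumption visible.
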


\begin{proof}
Assume, by way of contradiction, that for every initial state $\bm{\mathcal{T}}$ of the catalytic tape and for every set of $f_i$'s we have an algorithm such that
\begin{equation}
\label{master}
M_e(\btau,\boldsymbol{f}) = \sum_{i=1}^n E_i(\tau_i,f_i) + W_s(\bm{\mathcal{T}})+ W_m\big(\boldsymbol{U^{f}(\btau)}\big)     + W_e(\bm{\mathcal{T}})= \sum_{i=1}^n f_i^2 
\end{equation}

We take the discrete partial derivative of $M_e$ with respect to $f_i$, which in our case is defined as $\Delta_{f_i}(g) \coloneqq g(f_1,\ldots,f_i + 1, \ldots,f_n) - g(\boldsymbol{f})$, of both sides of Equation~\eqref{master}.
On the right hand side we get:

\[
\Delta_{f_i} \left( \sum_{j=1}^n f_j^2 \right) = (f_i + 1)^2 - f_i^2 = 2f_i + 1 \, .
\]

For the left hand side first observe that
\begin{equation*}
    \Delta_{f_i} \left(\sum_{i=1}^n E_i(\tau_i,f_i)\right) = P_i\big(U_i^{f_i+1}(\tau_i) \big) + Q_i\big(U_i^{f_i}(\tau_i) \big) =  P_i\big(U_i(y) \big) + Q_i\big(y\big) \ ,
\end{equation*}
where $y = U_i^{f_i}(\tau_i)$ is the state of register $r_i$ at the end of the forward pass. Note that $\Delta_{f_i}\big(W_s(\btau)\big)=\Delta_{f_i}\big(W_e(\btau)\big) = 0$. 
Overall, we get

\begin{align*}
\Delta_{f_i}\big(&M_e(\btau,\boldsymbol{f})\big) = P_i\big(U_i(y) \big) + Q_i\big(y\big)  + \\ &W_m\left(U^{f_1}_1(\tau_1),\ldots,\bm{U_i(y)},\ldots, U_n^{f_n}(\tau_n)\right) - W_m\left(U^{f_1}_1(\tau_1),\ldots,\bm{y},\ldots, U_n^{f_n}(\tau_n)\right) \ .
\end{align*}
We argue that
$\Delta_{f_i}\big(M_e(\btau,\boldsymbol{f})\big)
= 2f_i + 1$ is not possib which would give a contradiction.

We fix $f_j$ and $\tau_j$ for $j \neq i$, and think of  $\Delta_{f_i}\big(M_e(\btau,\boldsymbol{f})\big)$ as a function of $y = U_i^{f_i}(\tau_i)$, which we denote by $H_i(y)$. Note that the dependency of $\Delta_{f_i}\big(M_e(\btau,\boldsymbol{f})\big)$ on $f_i$ is only through $y = U_i^{f_i}(\tau_i)$. Let \(\tau_i\) and \(f_i\) be random variables, each distributed uniformly over its set of possible values. The intuition is as follows. Since $\tau_i$ is a uniform random variable and $U_i$ is a bijection it follows that
$y$ is also distributed uniformly at random and independent of $f_i$. Therefore it cannot hold any information about $f_i$. 
This is formalized in the following claim.

\begin{claim}
\label{information}
    Let $f_i$ and $\tau_i$ be independent random variables over the uniform distribution and let $I(\cdot,\cdot)$ be the mutual information function. Then $I(f_i,y) = 0$.
\end{claim} 
\begin{proof}
By \autoref{restricted}, $U_i$ is a bijection, thus it holds that $\forall f_i, \; U_i^{f_i}: \Sigma\rightarrow \Sigma$ is also a bijection. Therefore we have that $y$ is a random variable that is distributed uniformly over $\Sigma$. In particularly he have \[\text{Pr}_{\tau_i\sim U_{\Sigma}}\left[y = \sigma \mid f_i = k\right] = \text{Pr}_{\tau_i\sim U_{\Sigma}}\left[U^k(\tau_i)=\sigma\right]= \frac{1}{|\Sigma|}\] Where the last equality follows from the fact that $U_i^k(\cdot)$ is a bijection. 
Since the conditional probability of $y$ is completely independent of the choice of $f_i$ we have that $I(f_i,y) = 0$.    
\end{proof}

Since $H_i(y)$ is a deterministic function of $y$ we have that
$I(f_i,H_i(y))\le I(f_i,y) = 0$ by the data processing inequality and \autoref{information}. We have assumed the algorithm always computes $F_2$ exactly, therefore $H_i(y) = 2f_i +1$, which implies that $I\big(f_i,H(y)\big) = 1$ in contradiction.
\end{proof}

The last proof works with small changes for every $F_k$ with $k \geq 2$.

In \autoref{proof break} we show where the proof fails for $3$ passes.

\section{Computing $F_2$ in two passes}

Having identified the limitations of the family of algorithms considered in section \ref{sec:restricted LB}  we took a different approach and construct in this section a two-pass algorithm with the following properties.

\begin{theorem}[Two-pass catalytic computation of $F_2$]
\label{thm:two-pass-F2}
Let $\bsig=x_1,\ldots,x_m$ be a stream over $[n]$, and let
$f_i=\left|\{t\in[m]:x_t=i\}\right|$ be the frequency of item $i$.
There is a two-pass catalytic streaming algorithm that computes
$F_2(\bsig)=\sum_{i=1}^n f_i^2$ 
exactly using $O(1)$ regular registers and $O(nm)$ catalytic registers over the ring $Z_q$ with $q = \Theta(m^2)$, and hence uses
$O(nm\log m)$ bits of catalytic memory and $O(\log(nm))$ bits of regular memory. 
\end{theorem}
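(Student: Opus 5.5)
The plan is to exploit exactly the two features that \autoref{restricted} forbids, namely a single pad register per item and a second pass that undoes the updates of the first. For each item $i$ I allocate a catalytic counter $c_i$ with initial value $\tau_i$, plus a block of $m$ catalytic registers $r_{i,0},\dots,r_{i,m-1}$ with initial values $\rho_{i,0},\dots,\rho_{i,m-1}$. The block is addressed by the current counter value modulo $m$, which gives the $O(nm)$ catalytic registers of the statement.

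The target I will aim for is
\[
F_2 = 2\sum_i \binom{f_i+1}{2} - m = 2\sum_i \sum_{j=1}^{f_i} j - m,
\]
since $m$ is on the input tape. The difficulty is the cross term. Adding the current value of $c_i$ to $M$ at every occurrence produces $\sum_i \bigl(f_i\tau_i + \binom{f_i+1}{2}\bigr)$, and the term $\sum_i f_i\tau_i$ has to be cancelled. I checked that adding whole-memory terms such as $c_i^2-\tau_i^2$ or $c_i^3-\tau_i^3$, together with accumulator registers like $d_i \mathrel{+}= c_i^{\,a}$, does not help. Every such combination in which $\tau_i$ cancels leaves only a multiple of $f_i$. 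This is the degree-counting form of the discrete-derivative obstruction from \autoref{sec:restricted LB}, so the indexed block is genuinely needed.

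The mechanism I intend to use is as follows. In pass 1, at the $j$-th occurrence of $i$, I increment $c_i$ and then write to $r_{i,c_i \bmod m}$, so occurrence $j$ touches the register at address $\tau_i+j$. At the intermediate update the whole memory is visible, so a term $W_m$ can be added that depends on the pattern of marked addresses, for example $\sum_{\ell} \ell\cdot r_{i,\ell}$. The matching initial term $W_s$ is computed on the clean pad and subtracted. In this way the position $\tau_i+j$ enters multiplied by the marked value, and not merely summed as in the restricted model.

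In pass 2 the counter $c_i$ is \emph{not} simply run backwards through the same addresses. This is the feature the lower bound cannot capture, because the second-pass map is then not $U_i^{-1}$ on a single register. At the $k$-th occurrence of $i$ in pass 2, the register read is the one written at occurrence $f_i-k+1$ of pass 1. This pairing couples $j$ with $f_i-j+1$ and yields the product term $\sum_j j(f_i-j+1)$, which is genuinely quadratic in $f_i$ and is not a telescoping difference. I will then choose the coefficients of the contributions from the passes and from $W_s,W_m,W_e$ by the same bookkeeping as in the warm-up of \autoref{sec:poly}. That is, I expand each contribution into a junk part (polynomial in $\tau_i,\rho_{i,\cdot}$), cross parts, and a target part, and solve a small linear system so that the junk and cross parts cancel identically and $2\binom{f_i+1}{2}$, or directly $f_i^2$, survives. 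All arithmetic is in $Z_q$ with $q=\Theta(m^2)$, which requires $2$ to be invertible. Working modulo $m$ in the addresses is harmless because $f_i\le m$, so the $f_i$ addresses touched in one pass are distinct.

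The main obstacle is restoration within two passes. The pass-2 updates must return every $r_{i,\ell}$ and every $c_i$ to its initial value, while the information extracted from the pairing must still survive in $M$ after all pad-dependent terms cancel. My first attempt will be to make pass 2 decrement $c_i$ and subtract from $r_{i,c_i}$ exactly what pass 1 added, so restoration is automatic. The extra information then has to come from reading, before each decrement, a register whose content was fixed by a \emph{different} occurrence, combined with $W_m$ evaluated on the marked block. If this collapses back to a telescoping sum, I will fall back to letting pass 2 write into the fresh addresses $\tau_i+f_i+1,\dots,\tau_i+2f_i$. In that case I will use $W_e$, which sees the whole marked block, to compensate algebraically for the final counter offset $2f_i$. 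This second route is where I expect most of the real work to lie.
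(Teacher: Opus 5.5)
\textbf{Gap: your scheme stays inside the class that the paper proves cannot compute $F_2$ in two passes.} Regard the whole per-item block $R_i=(c_i,r_{i,0},\ldots,r_{i,m-1})$ as one register over the alphabet $\Sigma=Z_q^{m+1}$. Then the following all hold.
\begin{itemize}
\item Every operation you specify at an occurrence of $i$ is a function of $R_i$ alone. It uses neither the stream position nor the other items' registers.
\item Terms such as $\sum_\ell \ell\cdot r_{i,\ell}$ are just instances of $W_s,W_m,W_e$.
\item Restoration with $f_i=1$ forces the pass-2 map $V_i$ to satisfy $V_i\circ U_i=\mathrm{id}$ on a finite set, hence $V_i=U_i^{-1}$. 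Indeed, the pairing $j\leftrightarrow f_i-j+1$ you describe (decrement $c_i$, then read $r_{i,c_i}$) is exactly running $U_i^{-1}$. It is not a feature the lower bound misses.
\end{itemize}
So your scheme satisfies \autoref{restricted} verbatim. The discrete-derivative argument does not depend on $|\Sigma|$: $\Delta_{f_i}M_e$ depends on the pads only through $y=U_i^{f_i}(R_i)$, which is uniform and independent of $f_i$. Hence $2f_i+1$ can never appear, and the ``small linear system'' you plan to solve has no solution.

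Two smaller problems with the same step. First, $\sum_{j=1}^{f}j(f-j+1)=\binom{f+2}{3}$ is cubic, not quadratic. Second, the factor $f_i-j+1$ is never available to the algorithm, which at that moment only sees $c_i=\tau_i+j$.

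Your fallback fails on restoration. After the last pass $c_i=\tau_i+2f_i$, and $W_e$ only writes to $M$. With suitably chosen $\rho$'s, the final configuration is consistent with both $(f_i,\tau_i)=(0,c)$ and $(1,c-2)$. Undoing it would therefore require keeping every $f_i$ in regular memory.

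\textbf{The missing idea is to let the register touched by an occurrence depend on its position in the stream.} This is what the paper does. Put a complete binary tree $T$ over the $m$ positions and keep a register $r_{v,i}$ for every internal node $v$ and item $i$. When $x_t=i$ arrives, process each ancestor $v$ of leaf $t$ as follows.
\begin{itemize}
\item If $t$ is in the left subtree of $v$: increment $r_{v,i}$ in pass 1 and decrement it in pass 2.
\item If $t$ is in the right subtree: add $r_{v,i}$ to $r_o$ in pass 1 and subtract it in pass 2.
\end{itemize}
Both passes run forward and the updates commute. So in pass 2 the register is already back to $\tau_{v,i}$ by the time the right subtree is reached. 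Pass 1 adds $f_i(v_r)\bigl(\tau_{v,i}+f_i(v_\ell)\bigr)$, pass 2 subtracts $f_i(v_r)\tau_{v,i}$, and the net contribution is $f_i(v_\ell)f_i(v_r)$. Summing over lowest common ancestors gives $r_o=\sum_i\binom{f_i}{2}$, hence $F_2=m+2r_o$, using $n(m-1)$ catalytic registers.

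The essential point is that each register is restored partway through pass 2, before it is read again. No per-item register driven by a fixed $U_i$ can do this.
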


\begin{proof}
For simplicity, we assume $m$ is a power of $2$; the algorithm easily generalizes to any value of $m$.

We first describe the algorithm. 
Consider the complete binary tree $T$ with $m$ leaves, where the $t$-th leaf from the left is identified with the stream element $x_t$ at position $t$.
For every internal node $v$ of $T$, and every item $i\in[n]$, we allocate one catalytic register $r_{v,i} \in Z_q$.
We denote the arbitrary initial value of $r_{v,i}$ by $\tau_{v,i}$.
The algorithm also maintains one regular output register $r_o$, initialized to $0$.

The idea is that each internal node $v$ counts the number of pairs of equal stream elements $x_s=x_t=i$, for $s < t$, such that $x_s$ is in the left subtree of $v$ and $x_t$ is in the right subtree of $v$.
Each such pair of equal stream elements has a unique lowest common ancestor in $T$. Summing these contributions over all internal nodes counts every pair of equal stream elements exactly once.

The algorithm proceeds as follows.

\medskip
\noindent
\textbf{Pass 1 (Forward):}
When $x_t = i$ arrives, we scan all ancestors $v$ of the leaf corresponding to $t$, and for each we do the following:
\begin{itemize}
    \item If we arrived from the left child of $v$, we set $r_{v,i}\gets r_{v,i}+1$.
    \item If we arrived from the right child of $v$, we set $r_o\gets r_o+r_{v,i}$.
\end{itemize}

\noindent
\textbf{Pass 2 (Backward):}
Again, when $x_t = i$ arrives, we scan all ancestors $v$ of the leaf corresponding to $t$, and for each we do the following:
\begin{itemize}
    \item If we arrived from the left child of $v$, we set $r_{v,i}\gets r_{v,i}-1$.
    \item If we arrived from the right child of $v$, we set $r_o\gets r_o-r_{v,i}$.
\end{itemize}
    
At the end of the second pass, the algorithm outputs $m+2r_o \pmod q$.
We give pseudo-code in \autoref{alg:binary-tree-f2}.

We solve the case where $m$ is not a power of $2$ by adding ``dummy'' leaves to the tree up to the next power of $2$.

We now prove correctness. 
Fix an internal node $v$ with a left child $v_\ell$ and a right child $v_r$.
Let $f_i(v_\ell)$ and $f_i(v_r)$ be the number of occurrences of item $i$ in the subtrees of $v_\ell$ and $v_r$, respectively.
Because the stream is processed chronologically from left to right, all leaves in $v_\ell$ are processed before any leaves in $v_r$.

Assume first that the initial state is $\tau_{v,i}=0$ for all catalytic registers $r_{v,i}$.
During Pass 1, when we process the left subtree $v_\ell$, we increment $r_{v,i}$ exactly $f_i(v_\ell)$ times. 
Thus, when we begin processing the right subtree $v_r$, the register contains $r_{v,i} = f_i(v_\ell)$.
Each time we encounter $i$ in $v_r$ (which happens $f_i(v_r)$ times), we add $r_{v,i}$ to $r_o$.
Therefore, while processing the stream underneath $v$, we add exactly $f_i(v_\ell)f_i(v_r)$ to $r_o$.
This is exactly the number of pairs of $i$'s where one occurs in $v_\ell$ and the other in $v_r$.
Since each pair of equal stream elements $x_s=x_t=i$ (with $s < t$) has a unique lowest common ancestor, it contributes exactly once to the sum in $r_o$.

Since this holds for every item $i\in [n]$, at the end of the first pass we get:
\[
r_o = \sum_{i=1}^n \binom{f_i}{2}.
\]

The algorithm outputs
\[
m+2r_o = \sum_{i=1}^n f_i + 2\sum_{i=1}^n \frac{f_i^2-f_i}{2} = \sum_{i=1}^nf_i^2 = F_2(\bsig).
\]

Because $q = \Theta(m^2)$ is chosen such that $q>m^2$ and $F_2(\bsig)\le m^2$, this equality over $Z_q$ determines the exact integer value of $F_2(\bsig)$.

Now we argue that we get the exact same result even when the initial values $\tau_{v,i}$ are arbitrary (i.e., there is initial ``garbage'' in the registers).
During the first pass, when processing the left subtree $v_\ell$, we again increment $r_{v,i}$ exactly $f_i(v_\ell)$ times. Thus, when we process the right subtree $v_r$, the register $r_{v,i}$ contains $\tau_{v,i} + f_i(v_\ell)$. We add this value to $r_o$ exactly $f_i(v_r)$ times, contributing $f_i(v_r)(\tau_{v,i} + f_i(v_\ell))$ to $r_o$.

Crucially, the second pass also processes the stream in the forward chronological direction. When processing the left subtree $v_\ell$ in the second pass, we decrement $r_{v,i}$ exactly $f_i(v_\ell)$ times. Therefore, by the time we begin processing the right subtree $v_r$ in the second pass, $r_{v,i}$ has been perfectly restored to its initial value $\tau_{v,i}$. Thus, during the processing of the right subtree $v_r$ in the second pass, we subtract $\tau_{v,i}$ from $r_o$ exactly $f_i(v_r)$ times. 

The arbitrary value $\tau_{v,i}$ perfectly cancels out across the two passes:
\[
f_i(v_r)(\tau_{v,i} + f_i(v_\ell)) - f_i(v_r)\tau_{v,i} = f_i(v_\ell)f_i(v_r).
\]
This leaves a net addition of $f_i(v_\ell)f_i(v_r)$ to $r_o$, exactly as in the zero-initialized case. Furthermore, since we increment $r_{v,i}$ in the first pass and decrement it under the exact same conditions in the second pass, every catalytic register is restored to its initial value $\tau_{v,i}$ by the end of the computation.

Finally, we analyze the complexity of the algorithm. The complete binary tree has $m-1=O(m)$ internal nodes. For each internal node $v$ and item $i\in[n]$, we use one catalytic register over $Z_q$, so we have $O(nm)$ registers in total. They consist of $n(m-1)\lceil\log q\rceil=O(nm\log q)$ bits.
Substituting $q= \Theta(m^2)$, we get a total of $O(nm\log m)$ catalytic bits.

We need one regular register of $O(\log m)$ bits for $r_o$, and we also use regular memory to address the catalytic registers (to navigate between nodes of $T$). Addressing $O(nm)$ catalytic registers requires $O(\log(nm))$ bits.

To process each stream position, the algorithm loops over the $O(\log m)$ ancestors of the corresponding leaf, so the total running time is $O(m\log m)$ arithmetic operations per pass. The algorithm uses exactly two passes over the stream.
\end{proof}

\begin{algorithm}[H]
\caption{Two-Pass Catalytic Computation of $F_2$ via a Binary Tree}
\label{alg:binary-tree-f2}
\begin{algorithmic}[1]
\Statex \Input A sequential stream $\bsig=(x_1,\ldots,x_m)\in[n]^m$.
\Statex \Output The evaluation $F_2(\bsig)$.
\State Let $T$ be the complete binary tree with $m$ leaves.
\State Identify the leaves of $T$ from left to right with the stream positions $1,\ldots,m$.
\State For every internal node $v$ of $T$ and every item $i\in[n]$, allocate a catalytic register $r_{v,i}$. 
\Comment{\textbf{Initially}: $r_{v,i}=\tau_{v,i}$}
\State Initialize the regular output register $r_o\gets 0$.

\medskip
\For{$t = 1, \dots, m$}   \Comment{The 1st pass}
    \State $u\gets$ the leaf of $T$ corresponding to position $t$
    \While{$u$ is not the root}
        \State $v\gets \mathrm{parent}(u)$
        \If{$u$ is the left child of $v$}
            \State $r_{v,x_t}\gets r_{v,x_t}+1$
        \Else
            \State $r_o\gets r_o+r_{v,x_t}$
        \EndIf
        \State $u\gets v$
    \EndWhile
\EndFor

\medskip
\For{$t = 1, \dots, m$}   \Comment{The 2nd pass}
    \State $u\gets$ the leaf of $T$ corresponding to position $t$
    \While{$u$ is not the root}
        \State $v\gets \mathrm{parent}(u)$
        \If{$u$ is the left child of $v$}
            \State $r_{v,x_t}\gets r_{v,x_t}-1$
        \Else
            \State $r_o\gets r_o-r_{v,x_t}$
        \EndIf
        \State $u\gets v$
    \EndWhile
\EndFor

\State \Return $m+2r_o \pmod q$
\end{algorithmic}
\end{algorithm}

\section{Applications}
\label{sec:app}



In this section, we apply our polynomial evaluation algorithm to design multi-pass streaming algorithms for the exact computation of quantities that are notoriously difficult to evaluate exactly in the standard streaming model. Specifically, we present algorithms for counting distinct elements, counting occurrences of small subgraphs in an edge stream, and identifying the exact set of frequent elements.

\subsection{$F_0$ Using Powering}

The zero moment of a stream  $\bsig = x_1, x_2, \ldots, x_m$ over $[n]$, is $F_0(\bsig) := \sum_{i=1}^n \bm{1}_{f_i >0}$; that is the number of distinct items that appear in $\sigma$. 
Computing $F_0$ in the standard streaming model  using a constant number of passes is known to be hard. In particular, \cite{AMS} show that any $p$-pass algorithm  requires $\Omega(min\{m,n\}/p)$ bits of memory. In contrast, in the catalytic streaming model we have the following Lemma.

\begin{lemma}[$F_0$ Computation]
Let $\bsig=x_1,x_2,\dots,x_m$ be a stream over $[n]$. There exists a catalytic streaming algorithm that evaluates
\(
F_0(\bsig)
\)
in $4$ passes, using one regular register and $n$ catalytic registers. This algorithm works over the field $\mathbb{F}_p$ for every prime $p > m$ .
\end{lemma}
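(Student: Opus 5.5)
The plan is to reduce $F_0$ to a single power sum via Fermat's little theorem. Work over $\F_p$ with a prime $p>m$. Every frequency satisfies $0\le f_i\le m<p$, so $f_i\equiv 0 \pmod p$ exactly when $f_i=0$. Hence $f_i^{p-1}=\bm{1}_{f_i>0}$ in $\F_p$, and
\[
F_0(\bsig)\equiv\sum_{i=1}^n f_i^{p-1} \pmod p .
\]
Since $F_0\le n\le m<p$, the residue is the exact integer value. The task is therefore to evaluate $F_{p-1}$ with one catalytic register $r_i$ per item and one regular register $r_o$. This is a single power, so \autoref{thm:polyk} (which costs $k+1=p$ passes) is not the right tool. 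I will instead follow the four-pass scheme of \autoref{alg:calculating moments}, but avoid the $k$ auxiliary registers per item that \autoref{Powering Lemma} uses.

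The key algebraic step uses the binomial coefficients mod $p$. Since $\binom{p-1}{j}\equiv(-1)^j$, for $a=\tau_i+f_i$ and $b=\tau_i$ we get
\[
f_i^{p-1}=(a-b)^{p-1}=\sum_{j=0}^{p-1}a^j b^{p-1-j}.
\]
Equivalently, $f^{p-1}=1-\bm{1}_{f=0}$ depends only on whether the register value changed during a pass. I will exploit this with the four passes:
\begin{itemize}
\item \emph{Pass 1.} Add $1$ to $r_{x_\ell}$ for each stream element, so that $r_i=\tau_i+f_i$.
\item \emph{After pass 1.} Add $\sum_i r_i^{p-1}$ to $r_o$. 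Each power is computed by repeated squaring in $O(\log p)$ regular bits, which is allowed since only $O(1)$ registers of that size are needed.
\item \emph{Pass 2.} Subtract the frequencies, restoring $r_i=\tau_i$.
\item \emph{After pass 2.} Subtract $\sum_i r_i^{p-1}$ from $r_o$.
\end{itemize}
Each item then contributes $\bm{1}_{\tau_i+f_i\neq0}-\bm{1}_{\tau_i\neq 0}$. This equals $\bm{1}_{f_i>0}$ except for the junk cases $\tau_i=0$ (contribution $1$, correct whenever $f_i>0$, but wrong by $1$ if $f_i=0$? no: $f_i=0$ gives $0$, fine) and $\tau_i=-f_i\ne0$ (contribution $-1$ instead of $1$). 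So the only failure is $\tau_i+f_i=0$ with $f_i>0$, where the error is exactly $-2$. Likewise $\tau_i=0$ with $f_i>0$ gives $1-0=1$, which is correct.

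Passes 3 and 4 will correct this failure by repeating the same two-pass measurement on a transformed register. Before pass 3 I apply the bijection $r_i\mapsto r_i+1$ (and undo it at the end), which is reversible and uses no extra memory. The contribution is then $\bm{1}_{\tau_i+1+f_i\neq0}-\bm{1}_{\tau_i+1\neq0}$. To turn the two measurements into an exact indicator, I will combine them with the previous pair through products of the form $r_i^{p-1}\cdot(r_i+1)^{p-1}$, each computed from the single current value of $r_i$. These are indicators that $r_i\notin\{0,-1\}$. The aim is to choose the constants in $r_o$ so that the junk terms depending only on $\tau_i$ telescope away, leaving exactly $\bm{1}_{f_i>0}$ for every $\tau_i$. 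Because each update reads only the current value of $r_i$ and $r_i$ is restored after passes 2 and 4, the catalytic memory is returned to $\bm{\mathcal{T}}$. The algorithm uses $n$ catalytic registers and one output register.

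The main obstacle is this cancellation step: finding the exact functions of $r_i$ to add after each of the four passes so that every dependence on $\tau_i$ disappears while only one value of $r_i$ is available at a time. If the shift-by-one combination does not close up, I would first try the analogue of the proof of \autoref{Powering Lemma} in the case $k=p-1$. There the telescoping identity $\sum_j a^jb^{p-1-j}$ would be accumulated in $r_o$ rather than in auxiliary catalytic registers, using passes 3 and 4 to regenerate $a$ after $b$ has been consumed.
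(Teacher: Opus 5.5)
Your Fermat reduction is correct and matches the paper's: since $p>m$, $f_i^{p-1}=\bm{1}_{f_i>0}$ in $\F_p$, so $F_0(\bsig)\equiv F_{p-1}(\bsig)\pmod p$. The gap is the main step, which you never carry out. The partial scheme you give is already wrong after two passes. There item $i$ contributes $\bm{1}_{\tau_i+f_i\neq 0}-\bm{1}_{\tau_i\neq 0}$. For every $\tau_i\notin\{0,-f_i\}$ with $f_i>0$ this equals $1-1=0$, not $1$. So the failure is the generic case, not only the junk case $\tau_i=-f_i$. With $a=\tau_i+f_i$ and $b=\tau_i$, the quantity $a^{p-1}-b^{p-1}$ you compute has nothing to do with $(a-b)^{p-1}=\sum_j a^jb^{p-1-j}$. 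The mixed terms $a^jb^{p-1-j}$ with $0<j<p-1$ are exactly what you never form.

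The shift-by-one repair cannot close up either. In your plan every quantity added to $r_o$ is a function of the current value of a single $r_i$. That value is always $\tau_i+c$ or $\tau_i+c+f_i$ for constants $c$; products like $r_i^{p-1}(r_i+1)^{p-1}$ are still of this form. So item $i$ contributes $A(\tau_i)+B(\tau_i+f_i)$ for fixed functions $A,B$, and correctness for every $\btau$ forces this to be independent of $\tau_i$. The case $f_i=0$ gives $A=c_0-B$. The case $f_i=1$ then makes $B(\tau+1)-B(\tau)$ a constant $d$, hence $B(\tau+f)-B(\tau)=fd$. Any such scheme therefore outputs an affine function of the frequencies, never $F_0$.

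Your fallback has the same defect. Accumulating $a_i^jb_i^{p-1-j}$ in the shared register $r_o$ needs $\tau_i+f_i$ and $\tau_i$ for the same item at the same time, which one catalytic register per item never provides. The missing ingredient is per-item auxiliary catalytic storage, and that is what the paper uses. It simply runs the four-pass algorithm of \autoref{thm:calculating F_k} with $k=p-1$ over $\F_p$. By \autoref{Powering Lemma}, after two passes the registers $r_{i,j}$ hold $\tau_{i,j}-(\tau_i+f_i)^{k-j}$. The mixed terms are then paired with $\tau_i^j$ once $r_i$ is restored, and passes 3--4 only clean up the $r_{i,j}$. To get a complete proof, invoke that theorem directly.

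Be aware that the paper's argument, read literally, uses the $n(k+1)=np$ catalytic registers of that theorem. The $n$-register algorithm of \autoref{thm:polyk} would instead need $p$ passes. So your suspicion that a literal four-pass, $n$-register bound needs something more is reasonable, but your proposal does not supply that extra idea.
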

\begin{proof}
By Fermat's little theorem we have $f_i^{p-1} \equiv 1 \pmod p$ for every $f_i \not \equiv 0 \pmod p$, and $0$ whenever $f_i \equiv 0 \pmod p$. Therefore, evaluating the $p-1$-th moment of $\bsig$ over the field $\mathbb{F}_p$ gives us $F_0(\bsig)$.
\end{proof}

\begin{remark}
    Note that by Bertrand's postulate there exists a prime $m < p \le 2m$, so the registers used require at most $\log (m) + 1$ bits.
\end{remark}

\subsection{Finding Subgraphs in a Stream Using Polynomials}

Consider a multigraph $G = (V, E)$ with $|V| = n$ and $|E| = m$. In the graph streaming model, $G$ is revealed sequentially as a stream of edges $e_1, e_2, \ldots, e_m$. A fundamental problem in this setting is computing the number of occurrences  of a small target subgraph $H$ (e.g., a triangle or four cycle) within $G$, or simply detecting its presence. This task is difficult because local structural information is fragmented across the stream. In fact, exact counting, or even detecting a single instance of a simple subgraph like a triangle, requires $\Omega(m/p)$ bits of memory in the worst case for any $p$-pass streaming algorithm \cite{BarYossef2002,bera_et_al:LIPIcs.STACS.2017.11, braverman2013hard}.



\begin{lemma}[Subgraph Counting]
Let $H = (V_H, E_H)$ be a fixed target multigraph. There exists an algorithm that, given an edge stream of length $m$ over a vertex set $V$ of size $n$, computes the exact number of occurrences of $H$ in the underlying graph $G$. The algorithm requires $|E_H|+1$ passes over the stream, uses $O(|E_H| + |V_H|)$ regular registers and $n^2$ catalytic registers, and works over $Z_q$ with $q = O\big(\max\{m^{|E_H|}, n\}\big)$ .
\end{lemma}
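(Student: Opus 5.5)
The plan is to reduce subgraph counting to the polynomial evaluation of \autoref{thm:polyk}, with the frequencies being edge multiplicities. Identify the stream domain with vertex pairs $V\times V$ (a domain of size $n^2$, hence $n^2$ catalytic registers), mapping an edge $e=\{u,v\}$ to a canonical pair, for example $(\min(u,v),\max(u,v))$. The frequency $f_{uv}$ is then the multiplicity of edge $\{u,v\}$ in $G$. The number of occurrences of $H$ in $G$ is taken, as is standard for multigraphs, to mean the number of copies counted with edge multiplicities. It can be written as
\[
N_H(G)=\frac{1}{|\mathrm{Aut}(H)|}\sum_{\phi:V_H\hookrightarrow V}\ \prod_{\{a,b\}\in E_H} f_{\phi(a)\phi(b)},
\]
where the sum is over injective maps $\phi$. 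If $H$ itself has parallel edges, the monomial uses a product of falling factorials $\binom{f}{\mu}$ instead. In either case this is a polynomial $P$ of degree $|E_H|$ in the variables $\{y_{uv}\}$.

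The difficulty is that $P$ has about $n^{|V_H|}$ monomials, so it cannot be written on the input tape as \autoref{thm:polyk} assumes. The fix is to observe that \autoref{alg:polyk} never needs $P$ explicitly. After each pass it adds $\sum_{M} a_M c_{M,i} M(r_1,\ldots)$, and this sum can be generated on the fly. We enumerate all injective maps $\phi$ with $|V_H|$ vertex counters, which uses $O(|V_H|)$ regular registers. For each $\phi$ we compute the product of the current register values $r_{\phi(a)\phi(b)}$ over $E_H$ using $O(|E_H|)$ registers, multiply by the coefficient $(-1)^{k-i}\binom{k}{i}/k!$, and add the result to $r_o$. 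Dividing by $|\mathrm{Aut}(H)|$ can be deferred to the end, or avoided by enumerating only canonical embeddings. All monomials have the same degree $k=|E_H|$, so the same coefficient sequence applies to each.

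Correctness follows monomial by monomial from \autoref{monomial-lemma}, since the updates are additive and the registers are shared. This gives $k+1=|E_H|+1$ passes, and the last pass restores the catalytic memory. For the ring size, $N_H(G)\le m^{|E_H|}$, and we need inverses of $2,\ldots,k$ and of $|\mathrm{Aut}(H)|$. We also need $q$ large enough to index vertices and to hold the enumeration counters, which explains the $\max\{m^{|E_H|},n\}$ term. Choosing $q$ prime and larger than both quantities (and larger than $k!$ and $|\mathrm{Aut}(H)|$, which are constants) suffices.

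The main obstacle is the bookkeeping step. One has to argue that the implicit enumeration really reproduces \autoref{alg:polyk}'s update with the correct coefficients, including the multigraph and parallel-edge conventions for $H$. One also has to check that dividing by $k!$ and by $|\mathrm{Aut}(H)|$ in $Z_q$ recovers the exact integer count.
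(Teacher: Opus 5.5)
Your proposal is correct and follows essentially the same route as the paper. You encode edges as vertex pairs, run \autoref{alg:polyk} on the degree-$|E_H|$ polynomial summed over labelings of $V_H$, generate the monomials on the fly with $|V_H|$ vertex counters in regular memory, and divide by $|\mathrm{Aut}(H)|$. One small slip concerns your optional falling-factorial variant for parallel edges in $H$: there the polynomial is not homogeneous, so you must use \autoref{alg:polyk}'s degree-dependent coefficients $(-1)^{\deg M-i}\binom{\deg M}{i}/\deg(M)!$ rather than a single coefficient sequence, because a $k$-th finite difference annihilates the lower-degree terms.
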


\begin{proof}
For simplicity, consider first the case where $H$ is a triangle.

\begin{observation}
Denote by $x_{u,v}$ the number of occurrences of the edge $(u,v)$ in $G$. The exact number of triangles in $G$ is therefore given by the polynomial:
\[ P(\{x_{u,v} \mid (u,v)\in E\} )= \sum_{u < v < w \in V} x_{u,v} \cdot x_{v,w} \cdot x_{u,w} \, . \] 
\end{observation}

Given this observation, we can compute the number of triangles by evaluating $P$ over the stream using \autoref{alg:polyk} for cubic polynomials. 

We generate these coefficients of the polynomial $P(\{x_{u,v} \mid (u,v)\in E\})$ on the fly. To do so, we allocate $3$ registers in our regular memory, to index the vertices $u$, $v$, and $w$. By incrementally updating these registers, we iterate through all triplets $(u, v, w)$ such that $u < v < w$. For each such triplet, the corresponding monomial $x_{u,v} \cdot x_{v,w} \cdot x_{u,w}$ appears in $P$ with coefficient $1$.


\paragraph*{Generalization to Arbitrary Subgraphs}
To generalize this approach to an arbitrary subgraph $H = (V_H, E_H)$, we define a polynomial of degree $k = |E_H|$.  For every set of $|V_H|$ vertices and assignment of labels to the vertices we generate a monomial. This generates each monomial as many times as the number of automorphisms of $H$. So we divide the result by the cardinality of the automorphism group of $H$. This number should either be given as input or if $H$ is small it can be computed in (regular) memory polynomial in the size of $H$, by checking all permutations of the vertices in $H$ lexicographically.
We evaluate this polynomial over the frequencies of edges using \autoref{alg:polyk}.
As for triangles we do not store the polynomial but traverse the monomials using $|V_H|$ registers of size $\log n$ each.
Applying \autoref{thm:polyk}, this takes $|E_H|+1$ passes, and calculations are over some ring which requires registers of size at most $2 \Big( \frac{m}{|E_H|}\Big)^{|E_H|}$ bits since $\Big( \frac{m}{|E_H|}\Big)^{|E_H|}$ is an upper bound on the number of occurrences of $H$ in a multigraph with $m$ edges. 
\end{proof}

\subsection{ $F_2$ Heavy Hitters}
We use our two-pass algorithm for $F_2$ to compute 
 $\epsilon F_2$-heavy hitters. We say that an element $i$ is $\epsilon F_2$-heavy hitter if $f_i^2 \ge \epsilon F_2$.
We do it in logarithmic number of passes and with $O(1/\epsilon)$ registers of clean memory.  This is about the same amount of memory required by the classical state of the art (one-pass) streaming algorithm for $F_2$-heavy hitters \cite{braverman2017bptree}, just that we identify them exactly rather than approximately using a simple divide an conquer scheme.\footnote{Maybe a more common notion is $\epsilon$ $\ell_2$-heavy hitter, defined to be any item $i$ such that $f_i\ge \epsilon \sqrt{F_2}$. These notions are the same up to squaring $\epsilon$.}
Specifically we show the following.

\begin{lemma}
    There is an algorithm that given a stream $\bsig = x_1, x_2, \ldots, x_m$, returns the exact set of $\epsilon F_2$-heavy hitters of $\bsig$ in $2+2\log \epsilon n$ passes using $O(1/\epsilon)$ registers of regular memory and $O(nm)$ registers of catalytic memory,  over a ring of size $Z_q$ where $q = O(m^2)$ .
\end{lemma}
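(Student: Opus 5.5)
We will find the heavy hitters by divide and conquer over the domain $[n]$, using the two-pass algorithm of \autoref{thm:two-pass-F2} as a subroutine. The key observation is that that algorithm works unchanged on a \emph{restricted} or \emph{mapped} stream. If $S\subseteq[n]$ is an interval described by $O(1)$ regular registers, the algorithm can ignore every $x_t\notin S$ (treating those positions as dummy leaves). This computes $F_2(S):=\sum_{i\in S}f_i^2$ in two passes with the same catalytic registers, and the registers are restored afterwards. The two passes over the stream can also be shared by several such computations on disjoint intervals. Each computation then needs its own regular output register, but all of them can use the same catalytic registers $r_{v,i}$, because the intervals are disjoint and so touch different indices $i$.

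\textbf{Step 1.} Spend two passes computing $F_2=F_2(\bsig)$ exactly and store it in a regular register; this gives the threshold $\epsilon F_2$.

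\textbf{Step 2.} Maintain a list of at most $O(1/\epsilon)$ ``candidate'' intervals of $[n]$, starting with the single interval $[n]$. In each round, split every candidate interval into its two halves and run the two-pass algorithm on all halves simultaneously in the same two passes, using one regular output register per half. Keep a half only if $F_2(\text{half})\ge \epsilon F_2$. The invariant is that every heavy hitter $i$ lies in a kept interval, since $F_2(S)\ge f_i^2\ge\epsilon F_2$ for any $S\ni i$. Kept intervals are disjoint and each has $F_2$-mass at least $\epsilon F_2$, so at most $1/\epsilon$ are kept. Hence at most $2/\epsilon$ halves are processed per round, each needing $O(1)$ registers (endpoints and output), for $O(1/\epsilon)$ regular registers in total.

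\textbf{Step 3.} After $\lceil\log n\rceil$ rounds the intervals are singletons $\{i\}$, and $F_2(\{i\})=f_i^2$, so the kept singletons are exactly the heavy hitters. This counts $2+2\log n$ passes. To reach $2+2\log(\epsilon n)$, notice that the first $\log(1/\epsilon)$ levels of the recursion contain at most $1/\epsilon$ intervals in total, so we skip them: start directly from the partition of $[n]$ into $1/\epsilon$ equal intervals, and do this in the very first two passes together with the computation of $F_2$. Strictly speaking the first round of filtering needs $F_2$, which is only available after those two passes. We therefore compute $F_2(S)$ for all $1/\epsilon$ blocks together with the total in the first two passes and filter after the passes end. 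The remaining $\log(\epsilon n)$ halving rounds cost two passes each.

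\textbf{Main obstacle.} The delicate point is justifying that the simultaneous runs of \autoref{thm:two-pass-F2} on disjoint intervals do not interfere. Each catalytic register $r_{v,i}$ belongs to exactly one interval (the one containing $i$), so all the increment and decrement cancellations in that theorem's proof go through per interval. The only changes are that each item $x_t$ adds to the output register of the interval containing it, and that items outside every candidate interval are skipped. The modulus $q=O(m^2)$ suffices because every $F_2(S)\le m^2$.
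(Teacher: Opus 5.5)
Your proposal is correct and is essentially the paper's argument: divide and conquer over disjoint intervals of $[n]$, keeping at most $1/\epsilon$ buckets whose $F_2$-contribution is at least $\epsilon F_2$, and evaluating all (at most $2/\epsilon$) halves in the same two passes via parallel runs of \autoref{thm:two-pass-F2}. Your shared registers $r_{v,i}$, split by disjoint item sets, are exactly the paper's separate per-bucket trees. The only differences are cosmetic: you fold the global $F_2$ into the first two passes by starting from $1/\epsilon$ blocks of size $\epsilon n$, whereas the paper spends a separate round on the global $F_2$ and starts from $\lceil 2/\epsilon\rceil$ blocks; both give $2+2\log(\epsilon n)$ passes. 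Also, each interval $S$ needs a counter for $\sum_{i\in S}f_i$ to replace $m$ in the output formula $m+2r_o$.
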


\begin{proof}
\looseness=-1
The formal description of the algorithm is given in  Algorithm~\ref{alg-hh}.
In short we maintain a set
$\mathcal{B}$ of at most $1/\epsilon$ buckets, each is an interval of consecutive elements from $[n]$, containing the heavy hitters. In each iteration we partition each bucket into two buckets. Then we estimate the contribution to $F_2$ of the elements in each bucket simultaneously by running a copy of   \autoref{alg:binary-tree-f2} separately on the elements of each bucket. 
There could be at most $1/\epsilon$ buckets whose contribution is more than $\epsilon F_2$, we continue with them to the next iteration and discard the rest.
We need two passes to compute $F_2$ initially and then two passes in each iteration to compute the $F_2$ contributions of the buckets. The number of iterations is at most $\log \epsilon n $, as the initial size of the buckets is $\epsilon n$.
We keep in the regular memory the boundaries of the buckets in
$\mathcal{B}$ and the $F_2$
values of the elements in each of at most $2/\epsilon$ buckets.
Thus we need $O(1/\epsilon)$ registers over $Z_q$ for $q = O(m^2)$.

 The catalytic memory we need is the same as required by  \autoref{alg:binary-tree-f2}. In order to evaluate the $F_2$ of the elements of every bucket we use a tree that needs at most $O(\epsilon n m \log m)$ bits of catalytic memory. At every iteration we evaluate $1/\epsilon$ such buckets. Therefore the total number of bits in the catalytic memory the algorithm requires is $O(nm \log m)$.
\end{proof}

\begin{algorithm}[h!]
\caption{$F_2$-Heavy Hitters}
\label{alg-hh}
\begin{algorithmic}[1]
\Require Stream $\bsig \in [n]^m$, threshold $\epsilon \in (0, 1)$
\State $g\leftarrow F_2(\bsig)$ ;\; $k \gets \lceil 2/\epsilon \rceil$\Comment{Compute global $F_2$ }
\State Partition $[n]$ into $k$ disjoint buckets $\mathcal{B} = \{B_1, \dots, B_k\}$.
Bucket $B_i$ contains the elements from $l_i=(i-1) \cdot (n / k) + 1$
to $r_i= i \cdot (n/k)$ and is represented by the pair of indices
$\{l_i, r_i\}$.
\State $H \gets \emptyset$
\While{$\mathcal{B} \neq \emptyset$}
    \State $\mathcal{B}_{next} \gets \emptyset$
    \State  $b_1, \ldots, b_{|\mathcal{B}|}\leftarrow F_2^\mathcal{B}(\bsig)$ \Comment{Compute the $F_2$ contribution of the elements in each bucket}
    \ForAll{$i\in \{1, \ldots, |\mathcal{B}|\}$}
        \If{$b_i \ge \epsilon \cdot g$}
                \If{$l_i = r_i$}
                     $H \gets H \cup \{l_i\}$
                \Else
               \;      $mid \gets \lfloor (l_i + r_i) / 2 \rfloor$
                   ;   $\mathcal{B}_{next} \gets \mathcal{B}_{next} \cup \{[l_i, mid], [mid + 1, r_i]\}$
            \EndIf
            \EndIf
            \EndFor
    \State $\mathcal{B} \gets \mathcal{B}_{next}$
\EndWhile
\State \Return $H$
\end{algorithmic}
\end{algorithm}
\textbf{Extension to $F_k$-heavy hitters.} 
Notice we can use the same idea in order to compute $\epsilon F_k-$Heavy Hitters using Algorithm \ref{alg:calculating moments}. In this case we would need $4 + 4\log \epsilon n$ passes and  $O(kn)$ 
catalytic registers over the ring $Z_q$ with $q = O(nm^k)$.

\section{Concluding Remarks}
In this paper, we introduced the \emph{catalytic streaming model}, which equips algorithms with a large auxiliary memory that must be restored to its initial state by the end of the computation. The central question of this model is identifying which streaming problems inherently benefit from this restricted memory. 

While recent results in catalytic communication~\cite{catalyticCommunication} imply that catalytic memory offers no advantage for one-pass algorithms, we demonstrated that it is remarkably powerful given multiple passes. Specifically, we showed how to exactly evaluate multivariate polynomials of stream frequencies using $O(1)$ regular memory of logarithmically many bits.

We  characterize a broad family of algorithms for which computing $F_2$ in two passes is impossible, and show that this barrier can be overcome by designing an algorithm outside this family. 
Since all our other algorithms satisfy these restriction (or extensions of them to more passes) we believe that this family may still be of interest.

We believe the catalytic streaming model opens a rich landscape for future work. Several compelling open problems remain:
\begin{enumerate}

    \item \textbf{Higher-degree pass optimality:} Our current lower bound technique is limited to two passes
    and puts restrictions on the algorithms that it applies to. Can one prove more general lower bounds that relate the number of required passes to the degree of 
the evaluated polynomial?
  %
    \item \textbf{Exact Heavy Hitters:} Can heavy hitters be computed exactly using a constant number of passes in the catalytic model?
    \item \textbf{Further catalytic separations:} Which other fundamental streaming problems admit catalytic algorithms that require substantially less regular memory than their standard streaming counterparts?
\end{enumerate}

\bibliographystyle{alpha}
\bibliography{refs} 

\appendix

\section{Powering Lemma} \label{app:powering}
We give here the powering algorithm of \autoref{Powering Lemma}. We denote $c_i \coloneqq (-1)^i {k\choose i}$.

\begin{algorithm}[H]
\caption{Powering Algorithm (Lemma 10) [Modified]} \label{alg:powering}
\begin{algorithmic}[1]
\Statex \textbf{Initial State:} $r =\tau, \; r_o = 0, \; \forall i \in [k] :  r_i = \tau_i$
\State \label{item:powering algo line 1} \textcolor{red}{$I_1$}: \textbf{for} $i = 1$ \textbf{to} $k$ \textbf{do} $r_o \gets r_o + c_i\cdot r_i \cdot r^i$
\Comment{ $r_o = \sum_{i=1}^k(-1)^{i}\binom{k}{i}\tau_i\cdot \tau^i$}

\State Execute $r\leftarrow r+f$
 \Comment{$r= \tau + f$}

\State \label{item:powering algo line 3} \textcolor{red}{$I_2$}: \textbf{for} $i = 1$ \textbf{to} $k$ \textbf{do} $r_i \gets r_i - r^{k-i}$
\Comment{ $\forall i\in [k], \; r_i= \tau_i - (\tau+f)^{k-i}$}
\State \label{item:powering algo line 4} \textcolor{red}{$I_2$}: $r_o \gets r_o + r^k$
\Comment{ $r_o = (\tau + f)^k + \sum_{i=1}^{k} (-1)^i \binom{k}{i}\tau_i \tau^i$}

\State Execute $r\leftarrow r - f$
\Comment{$r= \tau $}

\State \label{item:powering algo line 6} \textcolor{red}{$I_3$}: \textbf{for} $i = 1$ \textbf{to} $k$ \textbf{do} $r_o \gets r_o - c_i\cdot r_i\cdot r^i$

\Statex \Comment{ $r_o = (\tau + f)^k + \sum_{i=1}^{k} (-1)^i \binom{k}{i}\tau_i \tau^i -  \sum_{i=1}^{k} (-1)^i \binom{k}{i}(\tau_i - (\tau + f)^{k-i})\tau^i$}
\Statex \Comment{ $=    (\tau + f)^k + \sum_{i=1}^{k} (-1)^i \binom{k}{i}\tau^i (\tau + f)^{k-i}$ }
\end{algorithmic}
\label{PoweringAlg}
\end{algorithm}

After the execution of line~\ref{item:powering algo line 1}, the output register satisfies
\[
r_o=\sum_{i=1}^k (-1)^i \binom{k}{i}\tau_i \tau^i.
\]

After the execution of line~\ref{item:powering algo line 3}, we have
\[
\forall i\in [k], \qquad r_i=\tau_i-(\tau+f)^{k-i}.
\]

After the execution of line~\ref{item:powering algo line 4}, the output register satisfies
\[
r_o=(\tau+f)^k+\sum_{i=1}^k (-1)^i \binom{k}{i}\tau_i \tau^i.
\]

Hence, after the execution of line~\ref{item:powering algo line 6}, we obtain
\begin{align*}
r_o
&= (\tau+f)^k+\sum_{i=1}^k (-1)^i \binom{k}{i}\tau_i \tau^i
   -\sum_{i=1}^k (-1)^i \binom{k}{i}\bigl(\tau_i-(\tau+f)^{k-i}\bigr)\tau^i \\
&= (\tau+f)^k+\sum_{i=1}^k (-1)^i \binom{k}{i}\tau^i(\tau+f)^{k-i} \\
&= f^k,
\end{align*}
where the last equality follows from the binomial theorem.

Therefore, at the end of the computation, the registers contain
\[
r=\tau, \qquad r_o=f^k, \quad \text{and for every $i\in [k]$,} \quad r_i=\tau_i-(\tau+f)^{k-i}.
\]  

\section{\textsc{Set-Disjointness}}
\label{app:dis}

\begin{proposition} \label{prop:cat disj}
    Suppose there exists a catalytic streaming algorithm $A$ that computes $F_2$ in $p$ passes using $s$ bits of regular memory. Then there exists a catalytic communication protocol for $\disj_n$ with $2p$ rounds that uses $s + \log (n) + c_0$ bits of regular memory, where $c_0$ is some absolute constant.
\end{proposition}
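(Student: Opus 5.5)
We use the standard reduction from $\disj_n$ to $F_2$: Alice and Bob together simulate $A$ on the concatenated stream $\bsig = \bsig_x \circ \bsig_y$. Here $\bsig_x$ lists, in increasing order, the indices $i$ with $x_i=1$, and $\bsig_y$ lists the indices with $y_i=1$. Every frequency is then in $\{0,1,2\}$, and $f_i=2$ exactly when $x_i=y_i=1$. Hence
\[
F_2(\bsig)=|x|+|y|+2|x\wedge y|, \qquad F_1(\bsig)=|x|+|y|,
\]
and so $\disj_n(x,y)=1$ if and only if $F_2(\bsig)=|x|+|y|$. The stream length $m=|x|+|y|$ is not known to either player in advance. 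We handle this either by padding, or by letting $A$ run with $m$ treated as part of what is simulated. The cleaner option is to pad both halves to length exactly $n$ using a fresh dummy item $0$ (so the domain is $\{0\}\cup[n]$). Then $m=2n$ is known, and the contribution of the dummy item is $(2n-|x|-|y|)^2$, which each player can correct for once $|x|+|y|$ is known.

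\textbf{Simulating one pass.} The shared state in the catalytic communication model is the catalytic memory together with the regular memory. Alice holds the state at the start of a pass and runs $A$ on her half $\bsig_x$, updating both memories. She then sends the whole state to Bob. The catalytic part goes through the catalytic channel and the $s$ regular bits go through the regular channel, exactly as in \cite[Definition~1]{catalyticCommunication}. Bob continues $A$ on $\bsig_y$ and sends the state back. So each pass costs two rounds, and $p$ passes cost $2p$ rounds. After the last pass Bob holds the state. By Definition~\ref{def:cat streaming}, the catalytic memory has been restored to $\btau$ and the regular memory contains $F_2(\bsig)$. Bob therefore meets the protocol's requirement that the last message restores the catalytic contents. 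If the model instead requires that the last \emph{sender} restore the memory, we add the answer bit to the final message, or shift which player finishes; this affects only $O(1)$ regular bits.

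\textbf{Extra regular memory.} The players must also compute $|x|+|y|$ in order to compare it with $F_2$. We carry a counter of $\lceil\log(2n+1)\rceil$ bits in the regular part of the messages. Alice adds $|x|$ to it in her first message and Bob adds $|y|$ to it. A constant number of further bits carries the comparison result and control information. In total the regular memory is $s+\log n+c_0$ bits. Alternatively, $|x|+|y|$ can be recovered from $F_1$ by computing the dummy correction inside the counter; either way the overhead is $\log n+O(1)$.

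\textbf{Main obstacle.} The step that needs care is matching the model's conventions: the message size in each round is fixed, and the messages must have the right form. I would handle this by padding every message to the full $c+s+\log n+c_0$ bits. I also have to ensure that the streaming algorithm's dependence on $m$, $n$ and the input tape is consistent across the two players. This is achieved by fixing $m=2n$ through dummy padding, and by putting the common parameters on both players' input tapes.
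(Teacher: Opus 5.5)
Your proposal is correct and follows essentially the same route as the paper. Alice and Bob alternately simulate $A$ on the concatenation of their supports, using two rounds per pass. Each message carries $A$'s regular state plus $O(\log n)$ extra bits for $|x|+|y|$ (the paper has Bob send $\|\y\|_0$), and the receiver accepts iff the output equals the stream length. Your padding with a dummy item to fix $m=2n$ in advance is a small refinement the paper leaves implicit: it addresses the fact that $A$ receives $m$ as input, which Alice would not otherwise know when the simulation starts.
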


\begin{proof}
    Let $\x$ be Alice's input and $\y$ be Bob's input. To compute $\disj_n(\x,\y)$, Alice and Bob simulate the streaming algorithm $A$ on an appropriate stream, while using the shared memory exactly as $A$ would.

    First, Alice inserts into the stream every item $i \in [n]$ such that $x_i = 1$. She then passes the inner state of $A$ plus it's memory configuration to Bob.
    
    Bob then inserts in a similar way $\y$ into the stream. Moreover, he calculates $\lVert\y\rVert_0$\footnote{The $L_0$ norm of $\y$, i.e. the number of $1$ bits in $\y$} and passes both $A$ and $\lVert\y\rVert_0$ to Alice. 

    They continue in this way, alternately simulating the execution of $A$ and passing it's inner state together with it's current configuration, until the execution of $A$ is complete.
    
    

    Let $a$ denote the output of $A$ on the simulated stream. Alice accepts if and only if
    \[
    a = \|\x\|_0 + \|\y\|_0.
    \]
    Note that $\|\x\|_0 + \|\y\|_0$ is exactly the length of the simulated stream.

    \begin{claim}
    Alice accepts if and only if $\disj_n(\x,\y)=1$.
    \end{claim}

\begin{proof}
Suppose first that $\disj_n(\x,\y)=1$. Then the supports of $\x$ and $\y$ are disjoint, so each item $i \in [n]$ appears in the stream at most once. Hence, for every $i \in [n]$, we have \( f_i \in \{0,1\}, \) and therefore \(f_i^2 = f_i.\)
It follows that
\[
F_2=\sum_{i=1}^n f_i^2 = \sum_{i=1}^n f_i = \|\x\|_0+\|\y\|_0.
\]
Since $A$ computes $F_2$, we have $a=\|\x\|_0+\|\y\|_0$, and so Alice accepts.

Conversely, suppose that $\disj_n(\x,\y)=0$. Then there exists some $i \in [n]$ such that $x_i=y_i=1$, and hence $f_i=2$. More generally, there is at least one index $i$ for which $f_i>1$. Since for every $t \in \mathbb{Z}_{\ge 0}$, \(t^2 \ge t,\)with strict inequality whenever $t>1$, we obtain
\[
F_2=\sum_{i=1}^n f_i^2 > \sum_{i=1}^n f_i = \|\x\|_0+\|\y\|_0.
\]
Thus $a > \|\x\|_0+\|\y\|_0$, and Alice rejects.
\end{proof}
    



    The claim about the number of rounds is trivial. Each time the simulation is passed from one player to the other, the players need to communicate the current configuration of $A$, which contributes $s + c_0$ bits of regular memory, $s$ bits for the memory size of $A$ and $c_0$ for description of it's inner state. In addition, Bob communicates the value $\|\y\|_0$ to Alice, which requires at most $\log n$ bits of regular memory.
\end{proof}

\begin{remark}
    Observe that the last reduction works for algorithm $A$ which computes any moment other than $F_1$, not just $F_2$.
\end{remark}

\section{A Three-Round Catalytic Communication Protocol for \textsc{Set-Disjointness}}
\label{app:inner product} 

In \cite[Proposition~5]{catalyticCommunication}, it is shown that the inner product of two vectors over $\text{GF}_2$ can be computed by a three-round catalytic communication protocol using only $1$ bit of regular memory and $n$ bits of catalytic memory. We observe that essentially the same construction, with a minor modification, yields a protocol for $\disj_n$. For completeness, we include the proof here, following their argument with the necessary changes.

\begin{proposition}
There exists a catalytic communication protocol for $\disj_n$ that uses $\log n + 1$ bits of regular memory, $n(\log n + 1)$ bits of catalytic memory, and requires $3$ rounds.
\end{proposition}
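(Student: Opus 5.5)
Following \cite[Proposition~5]{catalyticCommunication}, I will have each catalytic register act as a one-time pad. I will use $n$ catalytic registers $r_1,\ldots,r_n$ over $Z_q$ with $q = 2^{\lceil\log n\rceil+1} > n$, so each register has $\log n + 1$ bits. The regular memory is a single register $r_o$ of $\log n + 1$ bits, initialized to $0$. Let $\tau_i$ denote the initial content of $r_i$. The quantity to compute is $\langle \x,\y\rangle=\sum_i x_i y_i$ over the integers; since it lies in $\{0,\ldots,n\}$ and $q>n$, its value modulo $q$ determines it. Then $\disj_n(\x,\y)=1$ iff $\langle \x,\y\rangle = 0$.

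The protocol is as follows.

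Round 1: Alice sets $r_o \gets \sum_{i} x_i r_i$ and sends the memory. At this point $r_o = \sum_i x_i\tau_i$.

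Round 2: Bob sets $r_i \gets r_i + y_i$ for every $i$, and sends the memory.

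Round 3: Alice sets $r_o \gets \sum_i x_i r_i - r_o$, which now equals $\sum_i x_i(\tau_i+y_i) - \sum_i x_i\tau_i = \langle\x,\y\rangle$. The catalytic memory still holds $\tau_i + y_i$, not $\tau_i$, so it must be restored.

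The restoration is the main obstacle. In the model the last sender must restore the catalytic tape, but only Bob knows $\y$. Any fix has to respect the three-round budget and the fixed message size. I see two candidate fixes.

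The first candidate is to reorder the protocol so that Bob speaks last. Bob adds $y_i$ to $r_i$ and sends. Alice computes $r_o \gets \sum_i x_i r_i$ and sends. Bob then subtracts $y_i$ from each $r_i$, restoring $\tau$, and outputs $r_o$. The flaw is that Bob is left holding $\sum_i x_i(\tau_i+y_i)$ and cannot remove $\sum_i x_i\tau_i$, because he knows neither $\x$ nor $\tau$.

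The second candidate, which I will pursue, is an XOR/difference trick from the reference. I will check the exact definition in \cite[Definition~1]{catalyticCommunication}, because I expect one of two features that would make the first ordering work. Either the output may be read off the final state without the catalytic tape being fully reset before Alice's last action, or the final party is allowed to restore the tape (Alice undoes her own actions while Bob undoes his). In the original inner-product protocol the pattern is: Alice writes $\sum x_i r_i$; Bob flips $r_i$ by $y_i$ and passes; Alice XORs $\sum x_i r_i$ into $r_o$, which cancels the $\tau$ terms. The $r_i$ stay altered unless the restoring step is permitted after the output is fixed.

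So my concrete plan is to mirror the reference's structure exactly and change only two things: replace XOR by addition modulo $q$, and enlarge both $r_o$ and the registers to $\log n+1$ bits. If the reference's restoration argument holds, for example because the model lets Bob's step be self-inverse or lets the output be read from the clean bits of the third message, the same argument carries over verbatim, since all operations remain invertible additive shifts. Finally I will verify the stated resource bounds: $r_o$ uses $\log n + 1$ regular bits, the registers use $n(\log n + 1)$ catalytic bits, and the protocol has 3 rounds. The output is $\disj_n(\x,\y)=1$ iff $r_o = 0$.
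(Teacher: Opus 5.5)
\textbf{There is a genuine gap.} You never exhibit a three-round protocol that both computes $\langle\x,\y\rangle$ and leaves the catalytic tape restored.

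The protocol you pursue (Alice reads, Bob adds $y_i$, Alice reads again) ends with $r_i=\tau_i+y_i$. Only Bob can undo that, which costs a fourth round. Your ``second candidate'' is only a conditional appeal to features of \cite[Definition~1]{catalyticCommunication} that you have not checked. In the model as used here, the last sender must restore the tape. Your recollection of the reference protocol is also inverted: there, the party that modifies the tape speaks in rounds 1 and 3, and the other party reads the tape into clean memory in round 2.

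The missing observation is that the output is produced by the \emph{receiver} of the final message. That party may still compute locally on the message, including reading the already-restored tape. With this, your first candidate is already a correct proof, provided Alice rather than Bob outputs:
\begin{enumerate}
\item Bob adds $y_i$ to $r_i$.
\item Alice sets $r_o\gets\sum_i x_i r_i=\sum_i x_i(\tau_i+y_i)$.
\item Bob subtracts $y_i$, restoring every $r_i$ to $\tau_i$.
\item Alice, receiving this message, computes $r_o-\sum_i x_i r_i=\langle\x,\y\rangle \pmod q$ and outputs $1$ iff it is $0$.
\end{enumerate}
Your objection that Bob cannot remove $\sum_i x_i\tau_i$ is true but irrelevant, since Alice knows $\x$ and reads $\tau$ off the restored tape.

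This is exactly the paper's protocol with the roles swapped. There, Alice increments $r_i$ for $x_i=1$, Bob adds $\sum_{i:y_i=1}r_i$ to clean memory, Alice decrements, and Bob subtracts $\sum_{i:y_i=1}\tau_i$, read from the restored tape, before outputting. Your choice of modulus $q>n$ and your resource accounting then go through unchanged.
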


\begin{proof}
Let $\x \in \{0,1\}^n$ be Alice's input and let $\y \in \{0,1\}^n$ be Bob's input.

Partition the catalytic tape into $n$ registers
\[
r_1,r_2,\dots,r_n,
\]
each of size $\log (n) + 1$ bits. For each $i \in [n]$, let $\tau_i$ denote the initial contents of register $r_i$. The regular memory consists of $\log (n) + 1$ bits, so throughout the protocol all arithmetic operations are performed modulo $2n$. Alice and Bob exchange the clean abd catalytic memory in three rounds  as follows.

\begin{enumerate}
    \item \textbf{Alice} increments $r_i$ for every $i\in [n]$ such that
    $x_i =1$. 

    \item 
    \textbf{Bob} adds $r_i$ to regular memory for every $i\in [n]$ such that
    $y_i=1$. 
 
Following this round, the regular memory contains
    \begin{equation}\label{eq:disj protocol second pass memory eq}
        \sum_{i:\, y_i=1} r_i \pmod{2n}
        =
        \sum_{i:\, y_i=1} \bigl(\tau_i + \mathbf{1}[x_i=1]\bigr)
        \pmod{2n},
    \end{equation}
    where $\mathbf{1}[x_i=1]$ is the indicator of the event $x_i=1$.
    \item 
    \textbf{Alice} decrements $r_i$  
     for every $i\in [n]$ such that $x_i=1$.

Following this step, every catalytic register is restored to its initial value. .
\end{enumerate}

Finally, once Bob gets the second message from Alice, he
 Bob subtracts $r_i$ from the value in the regular memory for every $i$ such that $y_i = 1$. After doing so, using Equation~\eqref{eq:disj protocol second pass memory eq}, the value Bob obtains is
\begin{equation} \label{eq:disj protocol reg memory output}
    \sum_{i \, : \, y_i=1} \mathbf{1}[x_i=1] \pmod{2n} \, .
\end{equation}
Bob outputs $1$ if this value is $0$, and outputs $0$ otherwise.
    
The space bounds are immediate from the construction, the catalytic memory contains $n$ registers of $\log n+1$ bits each, for a total of $n(\log n+1)$ bits, while the regular memory uses exactly $\log n+1$ bits.

    To prove correctness, observe that
\[
\sum_{i:\, y_i=1} \mathbf{1}[x_i=1]
=
|\{i \in [n] : x_i=y_i=1\}|,
\]
namely, the number of coordinates in which both $\x$ and $\y$ contain a $1$. Therefore,
\[
\disj_n(\x,\y)=1
\quad\Longleftrightarrow\quad
\sum_{i:\, y_i=1} \mathbf{1}[x_i=1]=0.
\]
Moreover, this sum is always at most $n$, and hence is strictly smaller than $2n$. Thus its value modulo $2n$ is zero if and only if the sum itself is zero.
\end{proof}

\section{Why the Discrete Derivative Argument Breaks in Three Passes}
\label{proof break}
The impossibility result strictly relies on the two-pass constraint (one to hold some information about the input and one to clean the catalytic tape). We claim that if the algorithm is permitted a third pass, the information-theoretic contradiction vanishes.
In a three-pass model, we have three bijection matrices $U_{i,1}$, $U_{i,2}$, $U_{i,3}$, one for each pass such that\ for every $f_i$, and every $\tau_i$, applying $(U_{i,3})^{f_i}\Big((U_{i,2})^{f_i}\big((U_{i,1})^{f_i}(\tau_i)\big)\Big) = \tau_i$. 
If we apply the same proof technique then the derivative we get in this case will depend on $y_1 = (U_{i,1})^{f_i}(\tau_1)$ and $y_2 = (U_{i,2})^{f_i}\big((U_{i,1})^{f_i}(\tau_i)\big) $. So, we will have some deterministic function $H_i(\cdot,\cdot)$ that satisfies $H_i(y_1,y_2) = 2f_i + 1$. 
Crucially, while $y_1$ and $y_2$ are individually uniformly distributed and independent of $f_i$,  they jointly can determine $f_i$. 
In particular they indeed determine $f_i$ if there is a unique $k$ for which 
$y_2 = (U_{i,2})^k(y_1)$.

\end{document}